\newtheorem{theorem}{Theorem}[section]
\newtheorem{definition}[theorem]{Definition}
\newtheorem{example}[theorem]{Example}
\newtheorem{lemma}[theorem]{Lemma}
\newtheorem{proposition}[theorem]{Proposition}
\newtheorem{remark}[theorem]{Remark}
\newenvironment{proof}[1][Proof]{\textbf{#1.} }{\ \rule{0.5em}{0.5em}}
\def\gr{\text{gr}}
\def\l{\left[\left|}
\def\r{\right|\right]_{_{\hskip -.05 truecm N}}}
\def\lb{\left\{\left|}
\def\rb{\right|\right\}_{_{\hskip -.05 truecm N}}}
\def\ZZ{{\mathbb Z}}
\newcommand{\beq}{\begin{equation}}
\newcommand{\eeq}{\end{equation}}
\newcommand{\beqa}{\begin{eqnarray}}
\newcommand{\eeqa}{\end{eqnarray}}
\newcommand{\noi}{\noindent}
\newcommand{\g}{{\mathfrak g}}
\def\>{\rangle}
\def\<{\langle}
\begin{document}

\title{
{\bf  Color Lie algebras and Lie algebras of order $F$ }  }

\author{
{\sf R. Campoamor-Stursberg}
\thanks{e-mail:rutwig@pdi.ucm.es}$\,\,$${}^{a}$,
 {\sf M. Rausch de Traubenberg }\thanks{e-mail:
rausch@lpt1.u-strasbg.fr}$\,\,$${}^{b}$
\\
{\small ${}^{a}${\it Instituto de Matem\'atica Interdisciplinar,}}\\
{\small {\it Universidad Complutense de Madrid, 3 Plaza de
Ciencias, 28040 Madrid, Spain.}}\\
  \\
{\small ${}^{b}${\it Laboratoire de Physique Th\'eorique, CNRS UMR
7085,
Universit\'e Louis Pasteur}}\\
{\small {\it  3 rue de
l'Universit\'e, 67084 Strasbourg Cedex, France}}}
\date{\today}
\maketitle
\vskip-1.5cm

\vspace{2truecm}

\begin{abstract}
The notion of color algebras is generalized to the class of
$F$-ary algebras, and corresponding decoloration theorems are
established. This is used to give a construction of colored
structures by means of tensor products with Clifford-like
algebras. It is moreover shown that color algebras admit
realisations as $q=0$ quon algebras.
\par\smallskip
{\bf 2000 MSC:} 17B70, 17B75, 17B81
\end{abstract}

\section{Introduction}

 The problem of finding mass formulae for particles belonging to a
representation of an interaction group motivated, in the beginning
sixties, the efforts to combine interactions with relativistic
invariance in a non-trivial way. The well known obstructions to
such a construction \cite{Cole} finally led to the supersymmetric
schemes. In this sense, two different models unifying internal and
space-time symmetries, the conformal superalgebras
$\frak{su}(2,2;N)$ and the orthosymplectic algebra
$\frak{osp}(4;N)$ were proposed \cite{Haa}. The introduction of a
grading was soon recognized to be an indispensable requirement to
introduce transformations relating states obeying different
quantum statistics types. However, a further generalization seemed
necessary to clearly distinguish the color and flavor degrees of
freedom, which were treated in the same way in the two previous
models. A first approach to this question was made in \cite{Gun},
where a color algebra based on the nonassociative octonions was
proposed. This scheme constituted a mathematical model taking into
account the unobservability of quarks and their associated
massless color gauge bosons. An alternative construction,
preserving the associative framework, was given in \cite{cla1}.
These structures motivated by themselves the study of several
extensions of Lie algebras, keeping in mind the main properties
that made them interesting to describe symmetries of physical
phenomena. Among others, the generalizations that have been proven
to be physically relevant are color (or graded) Lie algebras
\cite{sylv,gj,cla-phys,cla1,sch} and, more recently, Lie algebras of
order $F$ \cite{flie,flie-phys}. These two types of algebras share
some properties, and are based upon a grading by an Abelian group.

\medskip

Tensor products constitute a natural tool to construct higher
dimensional algebraic objects starting from two given ones, as
well as to study their representation theory and the underlying
Clebsch-Gordan problem. However, the tensor product of two
algebras usually give rise to non-trivial identities that must be
satisfied, and often fail to preserve certain key properties (as
happens e.g. for Lie algebras). In order to prevent this
situation, generalized tensor products have been developed for
various structures, such as groups or Lie (super)algebras. In a
more general frame, there is no reason to believe that tensor
products of algebras with different structures can not lead to
further interesting structures, possibly preserving some of the
main properties of its components. Some attention has been
devoted, in this direction, to tensor products of the type
$C\otimes D$, where $C$ is a Clifford algebra $C$ and $D$ a
$\mathbb{Z}_{2}$-graded ring of differential operators on a
manifold, where the differential operators are interpreted, in
some sense, as ``quantum mechanical", the classical approximation
of which is given by a Poisson bracket. These products suggest a
deep relation with the commutation-anticommutation formalism in
Field Theory \cite{Nee}.

\medskip

In this paper we show that color algebras and Lie algebras of order
$F$  can
be unified leading to some new algebras that we call color Lie
(super)algebra of order $F$. Furthermore, we show that many
examples of these algebras can be seen as tensor product of given
algebras. In Section 2 we recall some results concerning the
general theory of color algebras, focusing on the isomorphism
between color and graded algebras \cite{ba}. We also review the
main features of a distinguished class of $F$-ary algebra. Section
4 is devoted to give a unification of both mentioned types of
structures, as well as an adapted decoloration theorem. It turns
out that color algebras arise as tensor products of ordinary
non-color algebras with algebras of Clifford type. In the last
section we show that all considered types of algebras are strongly
related to quon algebras for $q=0$. In particular, ``differential"
realisations in terms of quon algebras are obtained.

 \section{Color Lie algebras}\label{color}

Color Lie (super)algebras, originally introduced in \cite{cla1},
can be seen as a direct generalization of Lie (super)algebras.
Indeed, the latter are defined through antisymmetric (commutator)
or symmetric (anticommutator) products, although for the former
the product is neither symmetric nor antisymmetric and is defined
by means of a commutation factor. This commutation factor is equal
to $\pm 1$ for  (super)Lie algebras and   more  general for
 arbitrary color Lie (super)algebras.

As happened for Lie superalgebras, the basic tool to define color
Lie (super)algebras is a grading determined by an Abelian group.
The latter, besides defining the underlying grading in the
structure, moreover provides a new object known as commutation
factor.

\begin{definition}\label{abelian}
Let $\Gamma$ be an abelian group. A commutation factor $N$ is a
map $N: \ \Gamma \times \Gamma \to \mathbb
C\setminus\left\{0\right\}$
 satisfying the
following constraints:

\begin{enumerate}
\item  $N(a,b) N(b,a)=1, $ for all $a,b \in \Gamma$; \item
$N(a,b+c)= N(a,b) N(a,c),$ for all  $a,b,c  \in \Gamma$; \item
$N(a+b,c)= N(a,c) N(b,c),$ for all $a,b,c  \in \Gamma$.
\end{enumerate}
\end{definition}

The definition above implies, in particular, the following
relations
 \beqa N(0,a)=N(a,0)= 1,\quad N(a,b)=N(-b,a),\quad N(a,a)= \pm 1\
\text{ for all } a,b \in \Gamma, \eeqa where $0$ denotes the
identity element of $\Gamma$. In particular, fixed one element of
$\Gamma$, the induced mapping $N_{x}: \Gamma \to \mathbb C
\setminus\left\{0\right\}$ defines a homomorphism of groups.

\begin{definition}\label{colored}
Let $\Gamma$ be an abelian group and $N$ be a commutation factor.
The (complex) graded vector space $\g = \underset{a \in
\Gamma}{\oplus}\g_a$ is called a color Lie (super)algebra if
\begin{enumerate}
\item $\g_0$ is a (complex) Lie algebra.
 \item For all $a \in
\Gamma\setminus\left\{0\right\}$, $\g_a$ is a representation of
$\g_0$. If $X \in \g_0, Y \in \g_1$ then $\l X,Y \r=[X,Y]$ denotes the
action of $X$ on $Y$.
 \item For all $a, b \in \Gamma$, there exists a
$\g_0-$equivariant map $ \l \ , \ \r \g_a \times \g_b \to
\g_{a+b}$ such that for all $X \in \g_a, Y \in \g_b$  the
constraint $\l X,Y\r = -N(a,b) \l Y,X\r$ is satisfied. \item For
all $X \in \g_a,Y \in \g_b, Z \in \g_c$, the following Jacobi
identities

$$\l X,\l Y,Z \r \r = \l \l X, Y \r, Z \r + N(a,b) \l Y, \l X ,Z \r \r.$$
hold.
\end{enumerate}
\end{definition}

\begin{remark}
The Jacobi identity above can be rewritten in equivalent form as
$$N(c,a) \l X,\l Y,Z \r \r + N(a,b) \l Y,\l Z,X \r \r +
 N(c,a) \l Z,\l Y,X \r \r=0.$$
Further, property 1 in definition  \ref{abelian} is a consequence
of 3 in definition \ref{colored}, while  property 2 in Definition
\ref{abelian} is a consequence of the Jacobi identity 4 in
Definition \ref{colored}.  For the particular case
$\Gamma=\left\{0\right\}$, $\g=\g_0$ reduces to a Lie algebra. If
 $\Gamma=\mathbb Z_2$, we obtain the grading $\g=\g_0 \oplus \g_1$. If in
addition   $N(1,1)=-1$ holds, $\g$ is just a Lie superalgebra.
Therefore, the latter condition in definition \ref{colored} points
out to which extent color algebras extend ordinary Lie algebras
and superalgebras. Furthermore, if $N(a,b)=1$ for all $a,b \in
\Gamma$, $\g$ is a Lie algebra graded by the group  $\Gamma$. From
now Lie algebras with this last property will be called
$\Gamma-$graded Lie algebras.
\end{remark}

The grading group $\Gamma$ inherits naturally a $\mathbb
Z_2-$grading: $\Gamma=\Gamma_0 \oplus \Gamma_1$, where
$\Gamma_i=\Big\{a \in \Gamma |$  $
N(a,a)=(-1)^i\Big\}$ \cite{sch}. If $\Gamma_1 = 0$ ($\Gamma_1
\ne 0$), $\g$ is called a color Lie algebra (respectively
superalgebra). Starting from a color Lie superalgebra, we define
$N_+$ by $N_+(a,b)=(-1)^{|a| |b|}N(a,b)$, where $|a|$ is the
degree of $a$ with respect to the $\mathbb Z_2-$grading. It is not
difficult to check that $N_+$ is also a commutation factor.
Furthermore, if we decompose $\g=\g_0 \oplus \g_1$ with respect to
this $\mathbb Z_2-$grading, and introduce the Grassmann algebra
$\Lambda(\mathbb C^m)$, the analogue of the Grassmann-hull in the
case of Lie superalgebras, we can endow the color Lie superalgebra
with a color Lie algebra structure. Indeed, if we set
$\Lambda(\mathbb C^m)= \Lambda(\mathbb C^m)_0 \oplus
\Lambda(\mathbb C^m)_1$, where $\Lambda(\mathbb C^m)_i$ are of
degree $i$ then

$$
(\Lambda(\mathbb C^m) \otimes \g)_0= \Lambda(\mathbb C^m)_0
\otimes \g_0 \oplus  \Lambda(\mathbb C^m)_1 \otimes \g_1,
$$

\noi is a color Lie algebra with commutation factor $N_+$.

\begin{remark}
\label{assos-color} To any associative $\Gamma-$graded algebras
 ${\cal A}=\underset{a \in \Gamma} {\oplus}
 {\cal A}_a$ with multiplication law
$\mu$, one can associate a color Lie (super)algebra with
commutation factor $N$ denoted ${\cal A}_N$ by means of

$$
\l X_1, X_2 \r = \mu(X_1,X_2) - N(a,b) \mu (X_2,X_1), \ \mbox{~for
any~} (X,Y) \in {\cal A}_a \times {\cal A}_b.
$$

\noi On can easily see that the Jacobi identities are a
consequence of the associativity of the product $\mu$.
\end{remark}

Introducing a graded basis $\left\{T^{(a)}_i, i = 1,\cdots,
\text{dim }\g_a\right\}$ of $\g_a, a \in \Gamma$, the commutator
is expressed as \beqa \l T^{(a)}_\alpha, T^{(b)}_\beta\r=
C^{(a)(b)}{}_{\alpha \beta }{}{}^\gamma T^{(a+b)}_\gamma. \eeqa

\noi The scalars $C^{(a)(b)}{}_{\alpha \beta}{}{}^\gamma$ are
called the structure constants of $\g$ over the given basis.

\begin{definition}
A representation of a color Lie (super)algebra is a   mapping
$\rho : \g \to \text{End} (V)$, where $V= \underset{a \in
\Gamma}{\oplus}V_a$ is a graded vector  space such that $$\l
\rho(X), \rho(Y) \r= \rho(X) \rho(Y) - N(a,b) \rho(Y) \rho(X),$$
for all $X \in \g_a, Y \in \g_b$.
\end{definition}

We observe that for all $a,b \in \Gamma$ we  have $\rho(\g_a) V_b
\subseteq V_{a+b}$, which implies that any $V_a$ has the structure
of a $\g_0-$module. Fixing an element $v \in V$ and denoting by
$v_i$ ($i\in I$) its components, we can introduce the mapping
$\text{gr} \ : \ I \to \Gamma $ defined by $i\mapsto
\text{gr}(i)$, from which we conclude that $V_a=\left\{v=(v_i,
i\in I), \text{gr}(i)=a\right\}.$ The mapping $\text{gr}$ is
called the grading map. Now, for a given matrix representation
$\rho(T^{(a)}{}_\alpha)=M^{(a)}{}_\alpha$, where $T^{(a)}{}_\alpha
\in \g_a$, the non-vanishing indices of the matrix
$M^{(a)}{}_\alpha$ are those $\left(M^{(a)}{}_\alpha\right)_i{}^j$
satisfying the equality $\text{gr}(i)-\text{gr}(j)=a.$

\begin{example}
\label{color-gl} \label{sl} \rm{ Let $m=m_1+ \cdots +m_n$,
$V=\mathbb{C}^m$ and $\Gamma=\left\{a_1,\cdots,a_n\right\}$ be an
abelian group of order $n$. Let $\gr$ be defined by

$$\begin{array}{lll}
\text{gr}(i)=a_1&&i=1,\cdots,m_1 \\
\text{gr}(i)=a_2&&i=m_1+1,\cdots,m_1+ m_2 \\
&\vdots&\\
{\text{gr}}(i)=a_k&&i=m_1+\cdots + m_{k-1}+ 1,\cdots,m_1 + \cdots + m_k \\
&\vdots& \\
{\mbox{gr}}(i)=a_n&&i=m_1+\cdots + m_{n-1}+ 1,\cdots,m.
\end{array}
$$

\noi Consider a commutation factor $N$ satisfying the previous
relations (\ref{abelian}). We construct the color algebra (H.
Green and P. Jarvis in \cite{gj})
$\mathfrak{gl}(\{m\}_{\Gamma,N})=
  \underset{a \in \Gamma}{\oplus}\g_a$ by means of its defining relations. A
 basis of   $\g_a$ is given by the $ m \times m$ complex
matrices $(E^p{}_q)_i{}^j=\delta^p{}_i \delta_q{}^j , 1 \le p,q\le
m$ with $\text{gr}(q) -\text{gr}(p)=a$. The space $\g$ is endowed
with a color Lie (super)algebra structure:
$$
\begin{array}{lll}
\l E^p{}_q,E^r{}_s\r&=&E^p{}_q E^r{}_s-N\Big(\text{gr}(q)-\gr(p),
\gr(s)-\gr(r)\Big)E^r{}_s E^p{}_q \\
 &=& \delta^r{}_q E^p{}_s - N\Big(\text{gr}(q)-\gr(p),\gr(s)-\gr(r)\Big)
\delta^p{}_s E^r{}_q.
\end{array}$$
}
\end{example}

\noi Several subalgebras of $\mathfrak{gl}(\{m\}_{\Gamma,N})$ can
also be defined using this procedure (see \cite{sch,gj}). In
particular, if $\g$ is a color Lie (super)algebra with basis
$\left\{X_\alpha, \alpha=1,\cdots, \text{dim } \g\right\}$
satisfying $\l X_\alpha, X_\beta\r = C_{\alpha \beta}{}^\gamma
X_\gamma$, $\g$ can be embedded into some ${\mathfrak{
gl}}(\{m\})_{\Gamma,N}$ if we define $X_a=C_{ab}{}^c{\bar
e}^b{}_c$, where the ${\bar e}^b{}_c$ satisfy $\l {\bar e}^a{}_b,
{\bar e}^c{}_d\r = \delta^a{}_d {\bar
e}^c{}_b-N\left(\gr(b)-\gr(a),\gr(d)-\gr(c)\right) \delta^c{}_b
{\bar e}^a{}_d.$
\begin{example}
\label{tensor1} \rm{ Let $\g_0$ be an arbitrary Lie algebra and
let $ \left\{T_\alpha, \alpha=1,\cdots,\text{dim }\g_0\right\}$ be
a basis of $\g_0$ ($[T_\alpha,T_\beta]=f_{\alpha \beta}{}^\gamma
T_\gamma$). Consider ${\cal C}_n^2$, the complex algebra generated
by $e_1,e_2$ such that

$$e_1^n=e_2^n=1, \ e_1 e_2 = q e_2 e_1, \ q=
\exp{\left(\frac{2i\pi}{n}\right)}.$$

\noi This structure, called the generalised Clifford algebra, has
been studied by several authors (see \cite{gca} and references
therein). Introduce $e_{a,b}= e_1^a e_2^b$ a basis of ${\cal
C}_n^2$. It is easy to see that we have $e_{a,b} e_{c,d}=q^{-bc}
e_{a+c,b+d},$ and thus $\g= {\cal C}_n^2 \otimes \g_0$ is a color
Lie algebra for which the abelian group is $\Gamma = \mathbb Z_n
\times \mathbb Z_n$ and the commutation factor is $N((a,b),(c,d))=
q^{ad -bc}.$ Indeed, if we set $\g_{a,b}=\left\{T^{(a,b)}_\alpha=
e_{a,b}\otimes T_\alpha \right\}$ we have

$$\l T^{(a,b)}_\alpha,T^{(c,d)}_\beta\r =q^{-bc} f_{\alpha \beta}{}^\gamma
T^{(a+c,b+d)}_\gamma.$$

\noi It is a matter of a simple calculation to check the Jacobi
identities. Furthermore, it is known that ${\cal C}_n^2$ admits a
unique irreducible $n \times n$ faithful matrix  representation:

$$\rho(e_1)=\sigma_1=\begin{pmatrix}0&1&&\hdots&0 \\
                                    0&0&1  &\hdots&0\\
                                   \vdots&&&\ddots&\vdots \\
                                       0&\hdots&0&\hdots&1\\
                                     1&\hdots&0&\hdots&0
       \end{pmatrix},
  \rho(e_2)=\sigma_2=\begin{pmatrix} 1&0&\hdots&0 \\
                       0&q&\hdots&0 \\
                        \vdots&&\ddots&\vdots \\
                       0&\hdots&0&q^{n-1}
      \end{pmatrix}.
$$
If we introduce now the $n^2 \times n^2$ matrices

$$\rho_1= \sigma_1 \otimes {\bf 1},\
\rho_2= \sigma_2 \otimes \sigma_1 ,$$

\noi and  define $\rho_{(a,b)}=\rho_1^a \rho_2^b,$ together with a
$d-$dimensional matrix representation of $\g_0$ given by
$\rho(T_\alpha)=M_\alpha$, we obtain a $n^2 d$ dimensional
representation of $\g$ : $\rho(T^{(a,b)}_\alpha)=\rho_{(a,b)}
\otimes M_\alpha$.

This construction can be extended for larger abelian groups.
Indeed, starting from the generalised Clifford algebra ${\cal
C}_n^p$ generated by $e_1,\cdots,e_p$ satisfying $e_i^n=1,
i=1,\cdots,p, e_i e_j = q e_j e_1, 1 \le i<j \le p$ and defining
$\rho_{a_1,\cdots,a_p}=e_1^{a_1} \cdots e_p^{a_p}$, we obtain in
the same way  a color Lie algebra with abelian group $\mathbb
Z_n^p.$ Finally, let us mention that a
 similar construction can also be obtained in straightforward way by starting
from a Lie superalgebra. }
\end{example}

A somewhat different ansatz, which turns out to be of wide
interest in applications, refers to the realization of color Lie
(super)algebra in terms of differential operators \cite{gj}. Let
$\g$ be a color Lie (super)algebra with basis $\left\{T_\alpha,
\alpha=1,\cdots, \text{dim } \g \right\}$ and commutation
relations $\l T_\alpha,T_\beta\r=C_{\alpha \beta}{}^\gamma
T_\gamma$. Denote by $N$  the commutation factor. Assume further
that we have a $d-$dimensional matrix representation
$\rho(T_a)=M_a$, and introduce $d$ variables $\theta^i$ together
with their associated differential operators $\partial_i$. As
before, we assume that the index $i$ is of degree $\gr(i)$,
$\theta^i$ is of degree $-\gr(i)$ and $\partial_i$ of degree
$\gr(i)$ subjected to the following commutations relations

\beqa \label{oscil}
\l \theta^i,\theta^j\r &=& \theta^i \theta^j  + \epsilon
N\big(\gr(i),\gr(j)\big) \theta^j \theta^i
=0, \nonumber \\
\l \partial_i,\partial_j\r &=& \partial_i \partial_j+ \epsilon
N\Big(\gr(i),\gr(j)\Big)
\partial_j \partial_i=0,\\
\l \partial_i,\theta^j\r&=&\partial_i \theta^j + \epsilon
N\Big(\gr(i),-\gr(j)\Big) \theta^j \partial_i=\delta_i{}^j
\nonumber
\eeqa

\noi with $\epsilon=\pm 1$. A very elegant construction of
$\theta^i$ and $\partial_i$ can be found in \cite{k} in terms of
usual bosons (when $\epsilon=-1$)  or fermions (when
$\epsilon=1$). If we set ${\cal M}_a=\theta^i (M_a)_i{}^j
\partial_j$ and suppose that $\gr(a)=\gr(j) -\gr(i)$, a direct
computation gives

\beqa
\l {\cal M}_\alpha, \theta^i\r&=& \theta^j(M_\alpha)_j{}^i, \nonumber \\
\l {\cal M}_\alpha, \partial_i\r&=&- N(\gr(j)-\gr(i),\gr(i))
(M_\alpha)_i{}^j \partial_j, \\
\l {\cal M}_\alpha,{\cal M}_\beta\r&=&C_{\alpha \beta}{}^\gamma
{\cal M}_\gamma.\nonumber \eeqa

\noi This means that the variables $\theta^i$ are in the
fundamental representation of $\g_a$, while the variables
$\partial_i$ belong to the corresponding dual representation. These
two sets of variables, generalizing the usual bosonic and
fermionic algebras, play a  central role in differential
realizations of $\g_a$.
The next result shows that to a color Lie algebra, we can associate a graded
Lie algebra with the same grading group $\Gamma$. For this reason we call it
decoloration theorem.

\begin{proposition}
\label{decoloration} There is an isomorphism between color Lie
(super)algebras and graded Lie (super)algebras.
\end{proposition}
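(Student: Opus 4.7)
The plan is to construct an explicit rescaling of the color bracket that absorbs the commutation factor $N$, so that what remains is an ordinary (super) Lie bracket graded by $\Gamma$. Concretely, on the same underlying $\Gamma$-graded vector space as $\g$, I would produce a scalar function $F:\Gamma\times\Gamma\to\CC\setminus\{0\}$ and define a new bracket by $[X,Y]=F(a,b)\l X,Y\r$ for homogeneous $X\in\g_a$, $Y\in\g_b$. The claim will be that for a suitable $F$ this bracket makes $\g$ into a $\Gamma$-graded (super) Lie algebra. In the converse direction, every $\Gamma$-graded Lie (super)algebra is trivially a color Lie (super)algebra with constant commutation factor, so this rescaling gives the claimed isomorphism.

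For the rescaled bracket to satisfy ordinary (super)antisymmetry and the ordinary (super) Jacobi identity, $F$ must be bimultiplicative in each argument, i.e.\ $F(a+b,c)=F(a,c)F(b,c)$ and $F(a,b+c)=F(a,b)F(a,c)$, and it must satisfy $F(a,b)/F(b,a)=N(a,b)^{-1}$ in the color Lie algebra case, with an extra sign $(-1)^{|a||b|}$ in the superalgebra case to match the usual super Koszul rule. Once such an $F$ is in hand, everything else is bookkeeping: grading preservation is immediate since $F$ is scalar; (super)antisymmetry of $[-,-]$ follows directly from the color identity $\l X,Y\r=-N(a,b)\l Y,X\r$; and for the (super) Jacobi identity one expands each of the three terms, observes that bimultiplicativity collects the scalar prefactors into a common factor $F(a,b)F(a,c)F(b,c)$, and reduces to the color Jacobi identity of Definition \ref{colored}.

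The principal obstacle is therefore the existence of such an $F$; this is Scheunert's decomposition theorem. The strategy is to choose a generating family $\{e_i\}_{i\in I}$ of the abelian group $\Gamma$, fix a total ordering on $I$, set $F(e_i,e_j):=N(e_j,e_i)$ for $i<j$ and $F(e_i,e_j):=1$ otherwise, and extend by bimultiplicativity to all of $\Gamma\times\Gamma$. Consistency with the defining torsion relations of $\Gamma$ is guaranteed exactly by the fact that $N$ is itself bimultiplicative together with the identities $N(a,b)N(b,a)=1$ and $N(a,a)=\pm 1$ recalled after Definition \ref{abelian}; when $\Gamma$ is not finitely generated one passes to the infinite case by a Zorn's lemma argument on partial $F$'s defined on subgroups. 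Once the existence of $F$ is granted, the decoloration is explicit and the claimed bijection between color and graded Lie (super)algebras on a fixed graded vector space is established.
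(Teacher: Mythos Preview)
Your proposal is correct and follows the same underlying idea as the paper---absorb the commutation factor into a scalar multiplier so that the rescaled bracket becomes an ordinary graded (super) Lie bracket---but the packaging differs. The paper does not rescale on the original vector space; instead it introduces a one-dimensional twisted group algebra $G=\bigoplus_{a\in\Gamma}\mathbb{C}e_a$ with $e_ae_b=\sigma(a,b)e_{a+b}$, where $\sigma$ is only required to be a $2$-cocycle (associativity of $G$) with $\sigma(a,b)\sigma(b,a)^{-1}=N_+^{-1}(a,b)$, and then sets $\tilde\g_a=e_{-a}\otimes\g_a$. Under the obvious identification $e_{-a}\otimes X\leftrightarrow X$ this is exactly your rescaling with $F(a,b)=\sigma(-a,-b)$, so the two arguments are equivalent. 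The practical differences are: (i) you demand bimultiplicativity of $F$, which is strictly stronger than the paper's cocycle condition but makes the Jacobi verification a one-line factorisation; (ii) you actually sketch the construction of $F$ on generators of $\Gamma$ (Scheunert's decomposition), whereas the paper simply postulates the existence of a suitable $\sigma$ and refers to \cite{sch}. In that sense your argument is more self-contained, while the paper's tensor-product formulation makes the ``inverse'' relationship to Example~\ref{tensor1} and to the Grassmann-hull construction more transparent.
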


\begin{proof}
Consider a color Lie (super)algebra $\g=\underset{a \in
\Gamma}{\oplus} \g_a =\underset{a \in \Gamma}{\oplus}
\left<T^{(a)}_\alpha, \alpha=1,\cdots, \dim \g_a \right>$ with
commutation factor $N$ and grading group $\Gamma$. We also
introduce the commutation factor $N_+$ as defined previously. In
the case where $\g$ is a color Lie algebra we have $N_+=N$.
Consider now a graded algebra $G=\underset{a \in \Gamma} {\oplus}
G_a$ with $G_a = \mathbb C e_a$ and multiplication law given by

\beqa \label{ee} e_a e_b = \sigma(a,b) e_{a+b}, \eeqa

\noi such that the following constraint is satisfied \cite{sch}:

\beqa \label{sigma-prop} \sigma(a,b+c) \sigma(b,c)= \sigma(a,b)
\sigma(a+b,c), \forall a,b,c \in \Gamma. \eeqa

\noi If we suppose that $\sigma(a,b) \sigma^{-1}(b,a)=
N^{-1}_+(a,b)$ holds, then we have the equality $e_{-a} e_{-b}
-N_+^{-1}(a,b) e_{-b} e_{-a}=0$. This implies that
 $G$ is a subalgebra of the  associative
algebra defined by equations \eqref{oscil}. As a consequence,
condition \eqref{sigma-prop} is equivalent to assume the
associativity of the product in $G$. If we further suppose that
$N_+$ is a commutation factor, the additional condition

\beqa \label{mult-com} \sigma(a,b+c)
\sigma^{-1}(a,b)\sigma^{-1}(a,c)= \sigma(b+c,a)
\sigma^{-1}(b,a)\sigma^{-1}(c,a), \forall a,b,c \in \Gamma, \eeqa

\noi is satisfied. We call $\sigma$ a multiplier. In fact, it can
be easily shown that equation (\ref{mult-com}) is a consequence of
\eqref{sigma-prop}. Let us define

$$
\tilde \g= \underset{a \in \Gamma}{\oplus} \tilde \g_a =
 \underset{a \in \Gamma}{\oplus} e_{-a} \otimes \g_a.
$$

\noi We observe that all elements in $\tilde \g_a$ (for any $a \in
\Gamma$) are of degree zero. For $X \in \g_a, Y \in \g_b$ we set
$\tilde X=e_{-a} \otimes X,\; \tilde Y =e_{-b} \otimes Y \in
\tilde \g_a, \tilde \g_b$. From this, we derive the commutators

\beqa \label{graded} [\tilde X, \tilde Y]_\pm =\sigma(-a,-b)
e_{-a-b} \otimes \l X,Y \r \in \tilde \g_{a+b}. \eeqa

\noi These new brackets \eqref{graded} satisfy the Jacobi identity
(for $\tilde X=e_{-a} \otimes X, \tilde Y=e_{-b} \otimes Y, \tilde
Z=e_{-c} \otimes Z \in \tilde \g_a, \tilde \g_b,\tilde \g_c$)

$$
(-1)^{|\tilde Z||\tilde X|} [\tilde X, [\tilde Y, \tilde
Z]_\pm]_\pm+ (-1)^{|\tilde X||\tilde Y|} [\tilde Y, [\tilde Z,
\tilde X]_\pm]_\pm+ (-1)^{|\tilde Y||\tilde Z|} [\tilde Z, [\tilde
X, \tilde Y]_\pm]_\pm=0,
$$

\noi if the  $\sigma$ satisfies the following condition
\cite{sch}:

\beqa \label{mult-jac} \sigma(a,b+c) \sigma^{-1}(a,b)
\sigma^{-1}(a,c)
 \ \mbox{is invariant under cyclic permutation}, \forall a,b,c \in \Gamma.
\eeqa

\noi It turns out that \eqref{mult-com} and
\eqref{mult-jac} are equivalent to \eqref{sigma-prop}.
 This means that  the algebra $\tilde \g$ inherits the structure of a $\Gamma-$graded
Lie (super)algebra.
\end{proof}

\medskip

In \cite{sch}, a more general result was established, and a close
relationship between $\Gamma-$graded Lie (super)algebras
corresponding to different multiplication factors was established.
In fact we can even (composing the Grassmann-hull and the results
of Proposition \ref{decoloration})  associate a
$\Gamma-$graded-Lie algebra to a color Lie (super)algebra. This
decoloration theorem was  established in \cite{ba}. Let us briefly
recall the main steps of its proof.

\smallskip

Consider $\g = \underset{ a \in \Gamma}{ \oplus} \g_a$ a color Lie
(super)algebra with commutation factor $N$. Introduce also
$\Lambda=\underset{ a \in \Gamma}{\oplus} \Lambda_a$ a
$\Gamma-$graded algebra, canonically generated by the variables
$\theta^a_i, a \in \Gamma, i=1,\cdots,m_a$ satisfying

\beqa \label{theta} \theta^a_i \theta^b_j -N^{-1}(a,b) \theta^b_j
\theta^a_i=0. \eeqa

\noi Then the zero-graded part of $\g(\Lambda) = \Lambda \otimes
\g$,

$$
\g(\Lambda)_0 = \underset{a \in \Gamma} {\oplus} \Lambda_{-a}
\otimes \g_a,
$$

\noi is a Lie algebra. Indeed, for $X,Y,Z \in \g_a. \g_b, \g_c$
and $\theta, \psi, \eta \in \Lambda_{-a}, \Lambda_{-b},
\Lambda_{-c}$, it is not difficult to check that following
relations are satisfied:

\beqa
\begin{array}{ll}
1.& [\theta\otimes X, \psi\otimes Y] = \theta\psi \otimes \l X,Y
\r  \in
\Lambda_{-a-b} \otimes \g_{a+b}\\
2.&  [\theta\otimes X, \psi\otimes Y]   = -[\psi \otimes Y, \theta
\otimes X]
\\
3.&[\theta \otimes X,[\psi \otimes Y, \eta \otimes Z]]+
   [\psi \otimes Y,[\eta \otimes Z, \theta \otimes X]]+
   [\eta \otimes Z,[\theta \otimes X, \psi \otimes Y]]=0.
\end{array}
\eeqa

\noi This decoloration theorem has an interesting consequence.
Specifically, it means that one can associate a group to a color
Lie (super)algebra and that the parameters of the transformation
are related to the algebra $\Lambda$ above. This result was used
in the papers of Wills-Toro {\it et al} in the trefoil symmetry
frame \cite{cla-phys}. Finally, let us mention that this
decoloration theorem is in some sense the inverse procedure to the
one given in Example \ref{tensor1}.

\section{Lie algebras of order $F$}
Lie algebras of order $F$, introduced in \cite{flie}, correspond
to a different kind of extensions of Lie (super)algebras,
motivated by the implementation of non-trivial extensions of the
Poincar\'e algebra in QFT. This type of algebras is characterized
by an hybrid multiplication law: part of the algebra is realized
by a binary multiplication, while another part of the algebra is
realized {{\it via} an $F-$order product. More precisely,
 a Lie algebra of order $F$ is graded by the abelian group
$\Gamma=\mathbb Z_F$. The zero-graded part is a Lie algebra and an
$F-$fold symmetric product (playing the role of the anticommutator
in the case $F=2$) expresses the zero graded  part in terms of the
non-zero graded part.

\begin{definition}
\label{elementary} Let $F\in\mathbb{N}^*$. A ${\mathbb
Z}_F$-graded ${\mathbb C}-$vector space  ${\mathfrak{g}}=
{\mathfrak{g}}_0 \oplus {\mathfrak{g}}_1\oplus {\mathfrak{g}}_2
\dots \oplus {\mathfrak{g}}_{F-1}$ is called a complex Lie algebra
of order $F$ if
\begin{enumerate}
\item $\mathfrak{g}_0$ is a complex Lie algebra. \item For all $i=
1, \dots, F-1 $, $\mathfrak{g}_i$ is a representation of
$\mathfrak{g}_0$. If $X \in \g_0, Y \in \g_i$ then $[X,Y]$ denotes
the action of $X$ on $Y$ for any $i=1,\cdots,F-1$. \item  For all
$i=1,\dots,F-1$ there exists  an $F-$linear,
$\mathfrak{g}_0-$equivariant  map
 $\{ \cdots\} : {\cal S}^F\left(\mathfrak{g}_i\right)
\rightarrow \mathfrak{g}_0,$
 where  ${ \cal S}^F(\mathfrak{g}_i)$ denotes
the $F-$fold symmetric product of $\mathfrak{g}_i$. \item For all
$X_i \in \g_0$ and $Y_j \in \g_k$ the following ``Jacobi
identities'' hold:

\beqa \label{eq:jac} &&\left[\left[X_1,X_2\right],X_3\right] +
\left[\left[X_2,X_3\right],X_1\right] +
\left[\left[X_3,X_1\right],X_2\right] =0, \nonumber \\
&&\left[\left[X_1,X_2\right],Y_3\right] +
\left[\left[X_2,Y_3\right],X_1\right] +
\left[\left[Y_3,X_1\right],X_2\right]  =0, \nonumber \\
&&\left[X,\left\{Y_1,\dots,Y_F\right\}\right] = \left\{\left[X,Y_1
\right],\dots,Y_F\right\}  + \dots +
\left\{Y_1,\dots,\left[X,Y_F\right] \right\},  \nonumber \\
&&\sum\limits_{i=1}^{F+1} \left[ Y_i,\left\{Y_1,\dots, Y_{i-1},
Y_{i+1},\dots,Y_{F+1}\right\} \right] =0. \eeqa
\end{enumerate}
\end{definition}

\begin{remark}
If $F=1$, by definition $\mathfrak{g}=\mathfrak{g}_0$ and a Lie
algebra of order $1$ is a Lie algebra. If $F=2$, then $\g$ is a
Lie superalgebra. Therefore,  Lie algebras of order $F$ appear as
some kind of generalizations of Lie algebras and superalgebras.
\end{remark}

\begin{proposition}
Let
${\mathfrak{g}}={\mathfrak{g}}_0\oplus{\mathfrak{g}}_1\oplus\dots
\oplus{\mathfrak{g}}_{F-1}$ be a Lie algebra of order $F$, with
$F>1$. For any $i=1,\ldots,F-1$, the $\ZZ_F-$graded vector spaces
${\mathfrak{g}}_0\oplus{\mathfrak{g}}_i$ is  a Lie algebra of
order $F$. We  call these type of algebras \textit{elementary Lie
algebras of order $F$}.
\end{proposition}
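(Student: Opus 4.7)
The plan is to exhibit $\mathfrak{g}_0\oplus\mathfrak{g}_i$ as a $\mathbb{Z}_F$-graded vector space in which the degree-$0$ summand is $\mathfrak{g}_0$, the degree-$i$ summand is $\mathfrak{g}_i$, and every other homogeneous component $(\mathfrak{g}_0\oplus\mathfrak{g}_i)_j$ (for $j\neq 0,i$) is set to zero, and then to verify the four axioms of Definition \ref{elementary} by direct restriction of the ambient structure on $\mathfrak{g}$.

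The preliminary observation that makes the restriction consistent is that the $F$-fold symmetric product $\{\,\cdot\,,\ldots,\cdot\,\}\colon \mathcal{S}^F(\mathfrak{g}_i)\to\mathfrak{g}_0$ of the ambient algebra lands in the degree-$0$ component because $F\cdot i\equiv 0\pmod F$. Thus the restriction of $\{\,\cdots\,\}$ to $\mathcal{S}^F(\mathfrak{g}_i)$ takes values inside $\mathfrak{g}_0\oplus\mathfrak{g}_i$, as required.

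With that in hand, I would check axioms 1--3 in turn: axiom 1 is immediate, since $\mathfrak{g}_0$ is a complex Lie algebra by hypothesis. For axiom 2, the nonzero component $\mathfrak{g}_i$ is a $\mathfrak{g}_0$-module by the corresponding assertion for the ambient Lie algebra of order $F$, and all other components, being zero, are trivially $\mathfrak{g}_0$-modules. For axiom 3, I take the restricted symmetric bracket on $\mathcal{S}^F(\mathfrak{g}_i)$ and the zero map on $\mathcal{S}^F(\mathfrak{g}_j)$ for $j\neq i$; the $\mathfrak{g}_0$-equivariance is inherited from the ambient structure.

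For axiom 4, each of the four Jacobi-type identities in (\ref{eq:jac}) involves arguments drawn from $\mathfrak{g}_0$ together with elements of a single homogeneous summand $\mathfrak{g}_k$. Since these identities are postulated to hold in $\mathfrak{g}$ for every choice of $k$, they hold in particular for $k=i$; all other choices of $k$ contribute only trivial (vanishing) instances in our restricted algebra. Hence every axiom of Definition \ref{elementary} is satisfied and $\mathfrak{g}_0\oplus\mathfrak{g}_i$ is a Lie algebra of order $F$. The only real obstacle is the grading bookkeeping---confirming that padding the missing components with $0$ does not violate the definition---which I would address explicitly at the outset to make the verification of each axiom transparent.
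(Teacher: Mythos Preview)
The paper states this proposition without proof, treating it as an immediate consequence of Definition~\ref{elementary}; your direct verification by restriction is correct and is precisely the intended argument. One minor remark: your ``preliminary observation'' that $F\cdot i\equiv 0\pmod F$ is not strictly required, since axiom~3 of Definition~\ref{elementary} already stipulates that the $F$-fold bracket on $\mathcal S^F(\mathfrak g_i)$ takes values in $\mathfrak g_0$; what you have written is an explanation of why that stipulation is compatible with the $\mathbb Z_F$-grading rather than something that needs verification.
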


\begin{remark}
\label{assos-flie} Let
 ${\cal A}= {\cal A}_0 \ \oplus\ {\cal A}_1 \  \oplus \ \cdots \ \oplus
{\cal A}_{F-1}$ be an associative $\mathbb Z_F-$graded algebra
 with multiplication $\mu$. One can associate a Lie
algebra of order $F$ to ${\cal A}$ as follows. For any $a_0, a'_0
\in {\cal A}_0, a_1,a_2,\cdots,a_F \in {\cal A}_i, i=1, \cdots
F-1$
 we have

\beqa [a_0,  a'_0]&=& \mu(a_0 ,a'_0)- \mu(a'_0,a_0) 
\in {\cal A}_0, \nonumber \\
\left[a_0, a_1\right]&=&  \mu(a_0,a_1)- \mu(a_1,a_0) \in {\cal A}_i,  
\nonumber \\
\left\{ a_1,a_2, \cdots, a_F\right\}&=&
 \mu(a_1,a_2, \cdots ,a_F) + \mbox{~ perm.} \
\in {\cal A}_0. \nonumber \eeqa

\noi Furthermore, one can easily see that the Jacobi identities
are  a consequence of the associativity of the product $\mu$.
Moreover, if ${\cal A}$ is an associative algebra and ${\cal
C}_F^1$ the commutative $F-$dimensional algebra generated by a
primitive element $e$ such that $e^F=1$, the algebra ${\cal C}_F^1
\otimes {\cal A}= (1 \otimes {\cal A}) \ \oplus \ (e \otimes {\cal
A}) \ \oplus \ \cdots \oplus \ ( e^{F-1} \otimes {\cal A})$ is
$\mathbb Z_F-$graded, and thus leads to a Lie algebra of order
$F$.
\end{remark}

\begin{definition}
A representation of an elementary Lie algebra of order $F$ is  a
linear map $\rho : ~ \g=\g_0 \oplus \g_1 \to \mathrm{End}(V)$,
such that  for all $X_i \in \g_0, Y_j \in \g_1$

\beqa \label{eq:rep}
\begin{array}{ll}
& \rho\left(\left[X_1,X_2\right]\right)= \rho(X_1) \rho(X_2)-
\rho(X_2)\rho(X_1), \cr & \rho\left(\left[X_1,Y_2\right]\right)=
\rho(X_1) \rho(Y_2)- \rho(Y_2)\rho(X_1), \cr & \rho
\left\{Y_1.\cdots,Y_F\right\}=
 \sum \limits_{\sigma \in S_F}
\rho\left(Y_{\sigma(1)}\right) \cdots
\rho\left(Y_{\sigma(F)}\right). \cr
\end{array}
\eeqa

\noindent $S_F$ being the symmetric group of $F$ elements.
\end{definition}

By construction, the vector space $V$ is graded
 $V= V_0 \oplus \cdots \oplus V_{F-1}$,
 and for all $a =\{0,\cdots, F-1\}$, $V_a$ is a $\g_0-$module. Further, the
 condition $\rho(\g_1) (V_a) \subseteq V_{a+1}$ holds.

\begin{theorem} \label{tensor} (M. Rausch de Traubenberg, M. J. Slupinski,
\cite{flie}).

\noi Let $\mathfrak{g}_{0}$ be a Lie algebra and
$\mathfrak{g}_{1}$ be a $\mathfrak{g}_{0}$-module  such that:

(i) $\g=\mathfrak{g}_{0} \oplus \mathfrak{g}_{1}$ is a Lie algebra
of order $F_1>1$;

(ii) $\mathfrak{g}_{1}$ admits  a $\mathfrak{g}_{0}$-equivariant
symmetric form of order $F_2 \ge 1$.

\noindent Then  $\g=\mathfrak{g}_{0}  \oplus \mathfrak{g}_{1}$
inherits the structure of a Lie algebra of order $F_1 +F_2$.
\end{theorem}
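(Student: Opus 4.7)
The plan is to produce the promised $(F_1+F_2)$-fold symmetric bracket by ``polarizing'' the symmetric form of order $F_2$ against the original $F_1$-fold bracket. Concretely, given the $\g_0$-invariant symmetric form $\varphi : \mathcal{S}^{F_2}(\g_1) \to \mathbb{C}$ and the given $F_1$-bracket $\{\,\cdot\,\}_{F_1} : \mathcal{S}^{F_1}(\g_1) \to \g_0$, I would define for $Y_1, \dots, Y_{F_1+F_2} \in \g_1$
$$
\{Y_1,\dots,Y_{F_1+F_2}\}_{F_1+F_2} \;:=\; \sum_{\substack{I \sqcup J = \{1,\dots,F_1+F_2\}\\ |I|=F_2,\; |J|=F_1}} \varphi(Y_I)\,\{Y_J\}_{F_1} \;\in\; \g_0,
$$
where $Y_I$ denotes the tuple $(Y_i)_{i\in I}$ (the ordering is irrelevant by symmetry). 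This is $F_1+F_2$-linear and symmetric by construction, and takes values in $\g_0$ as required by Definition~\ref{elementary}.

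Next I would check the first three Jacobi identities of Definition~\ref{elementary}. The first two involve only elements of $\g_0$ and the binary action of $\g_0$ on $\g_1$, so they hold unchanged. For the third (the ``derivation'' identity expressing $\g_0$-equivariance of the new bracket), expand $[X,\{Y_1,\dots,Y_{F_1+F_2}\}_{F_1+F_2}]$ using the definition, and apply the known $\g_0$-equivariance of both $\varphi$ and $\{\,\cdot\,\}_{F_1}$ summand by summand; the resulting sum reorganizes exactly into the required $\sum_k \{Y_1,\dots,[X,Y_k],\dots,Y_{F_1+F_2}\}_{F_1+F_2}$.

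The main obstacle, and the only step that uses a genuine combinatorial identity, is the last Jacobi identity
$$
\sum_{i=1}^{F_1+F_2+1} \bigl[Y_i,\{Y_1,\dots,\widehat{Y_i},\dots,Y_{F_1+F_2+1}\}_{F_1+F_2}\bigr] = 0.
$$
The clean way is to re-index the triple sum. After substituting the definition, each term is labelled by an ordered decomposition of $\{1,\dots,F_1+F_2+1\}$ into three disjoint pieces: a singleton $\{i\}$, an $F_2$-set $I$ on which $\varphi$ is evaluated, and an $F_1$-set $J$ feeding $\{\,\cdot\,\}_{F_1}$. Regrouping so that $K := \{i\}\sqcup J$ is fixed as an $(F_1+1)$-set while $I$ is its complement, one obtains
$$
\sum_{\substack{I \sqcup K = \{1,\dots,F_1+F_2+1\}\\ |I|=F_2,\; |K|=F_1+1}} \varphi(Y_I)\,\Bigl(\sum_{i\in K} \bigl[Y_i,\{Y_{K\setminus\{i\}}\}_{F_1}\bigr]\Bigr).
$$
The inner parenthesis is precisely the last Jacobi identity of Definition~\ref{elementary} applied to the $(F_1+1)$ elements $\{Y_k\}_{k\in K}$ in the original order-$F_1$ structure on $\g_0\oplus\g_1$, hence vanishes. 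Thus the whole sum is zero, and $\g_0\oplus\g_1$ is a Lie algebra of order $F_1+F_2$.

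The only subtlety to flag is the interpretation of ``symmetric form of order $F_2$'': I read it as a $\g_0$-invariant element of $\mathcal{S}^{F_2}(\g_1^*)$, which is what makes the scalar factor $\varphi(Y_I)$ above sensible. If instead $\varphi$ were allowed to take values in a $\g_0$-module $W$, the same argument still goes through provided $W$ pairs with $\g_0$ in a $\g_0$-equivariant way, but the statement as given is cleanest in the scalar case.
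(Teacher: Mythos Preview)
The paper does not give its own proof of this theorem; it simply states the result and attributes it to \cite{flie}. Your argument is correct and is exactly the construction carried out in that reference: one defines the new $(F_1+F_2)$-bracket as the sum of $\varphi(Y_I)\{Y_J\}_{F_1}$ over all splittings $I\sqcup J$, equivariance gives the first three Jacobi identities, and your regrouping by the $(F_1{+}1)$-set $K=\{i\}\sqcup J$ reduces the fourth Jacobi identity termwise to the fourth Jacobi identity of the original order-$F_1$ structure. Your reading of ``$\g_0$-equivariant symmetric form of order $F_2$'' as a $\g_0$-invariant element of $\mathcal{S}^{F_2}(\g_1^*)$ is also the intended one.
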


\noi The theorem above can be generalized to include the case
$F_1=1$ \cite{flie}.

\begin{example}
\rm{ (This is a consequence of Theorem \ref{tensor}, modified to
include $F_1=1.$) Let $\g_0$ be any Lie algebra and let $\g_1$ be
its adjoint representation. Introduce $\{J_a, a=1,\cdots,
\text{dim } \g_0\}$
 a basis of $\g_0$,
$\{A_a,a=1,\cdots, \text{ dim } \g_0\}$ the
 corresponding basis of $\g_1$ and  $g_{ab}=Tr(A_aA_b)$  the Killing form.
Then one can endow $\g=\g_0\oplus\g_1$ with a Lie algebra of order
$3$ structure given by \beqa
\left[J_a,J_b\right] &=&f_{ab}{}^c J_c, \nonumber \\
\left[J_a,A_b\right] &=&f_{ab}{}^c A_c, \nonumber \\
\left\{ A_a,A_b,A_c \right\}&=&g_{ab}J_c+g_{ac}J_b+g_{bc}J_a. \nonumber \\
\eeqa }
\end{example}

\begin{example}
\label{FP} \rm{
 Let ${\mathfrak{g}}_0 = \left< L_{\mu \nu }=-L_{\nu \mu}, P_\mu,
\ \mu,\nu=0,\cdots,D-1\right>$ be the Poincar\'e algebra in
$D-$dimensions
and ${\mathfrak{g}}_1=\left<V_\mu, \ \mu=0,\cdots, D-1 \right>$ be the $D-$%
dimensional vector representation of ${\mathfrak{g}}_0$. The
brackets
\begin{eqnarray}  \label{fp-bracket}
\left[L_{\mu \nu }, L_{\rho \sigma}\right]&=& \eta_{\nu \sigma }
L_{\rho \mu }-\eta_{\mu \sigma} L_{\rho \nu} + \eta_{\nu
\rho}L_{\mu \sigma} -\eta_{\mu
\rho} L_{\nu \sigma},  \notag \\
\left[L_{\mu \nu }, P_\rho \right]&=& \eta_{\nu \rho } P_\mu
-\eta_{\mu \rho } P_\nu, \ \left[L_{\mu \nu }, V_\rho \right]=
\eta_{\nu \rho } V_\mu
-\eta_{\mu \rho } V_\nu, \ \left[P_{\mu}, V_\nu \right]= 0,  \notag \\
\{ V_\mu, V_\nu, V_\rho \}&=& \eta_{\mu \nu } P_\rho + \eta_{\mu
\rho } P_\nu + \eta_{\rho \nu } P_\mu,  \notag
\end{eqnarray}
\noindent with the metric $\eta_{\mu
\nu}=\mathrm{{diag}(1,-1,\cdots,-1)}$ endow
${\mathfrak{g}}={\mathfrak{g}}_0 \oplus {\mathfrak{g}}_1$ with an
elementary Lie algebra of order $3$ structure which is denoted ${I\mathfrak{%
so}}_3(1,D-1)$. }
\end{example}

\begin{example}
\label{flie-gl} \rm{
\label{mat} Let $\text{mat}(m_1,m_2,m_3)$ and $\text{mat}_{\text{el}%
}(m_1,m_2,m_3)$ be the set of $(m_1 + m_2 +m_3) \times (m_1 + m_2
+ m_3)$ matrices of the form

\begin{eqnarray}  \label{gl}
\begin{array}{ll}
\text{mat}_{\mathrm{el}}(m_1,m_2,m_3) = \left\{
\begin{pmatrix}
a_0 & b_1 & 0 \\
0 & a_1 & b_2 \\
b_0 & 0 & a_2%
\end{pmatrix}
\right\}, & \text{mat}(m_1,m_2,m_3)=\left\{
\begin{pmatrix}
a_0 & b_1 & c_2 \\
c_0 & a_1 & b_2 \\
b_0 & c_1 & a_2%
\end{pmatrix}%
\right\},%
\end{array}%
\end{eqnarray}
\noindent with $a_0 \in \mathfrak{gl}(m_1), a_1 \in
\mathfrak{gl}(m_2),a_3 \in \mathfrak{gl}(m_3),$ $b_1 \in
\mathcal{M}_{m_1,m_2}(\mathbb{C}), b_2\in
\mathcal{M}_{m_2,m_3}(\mathbb{C}), b_0 \in
\mathcal{M}_{m_3,m_1}(\mathbb{C})
$, and $c_0 \in \mathcal{M}_{m_2,m_1}(\mathbb{C}), c_1\in \mathcal{M}%
_{m_3,m_2}(\mathbb{C}), c_2 \in
\mathcal{M}_{m_1,m_3}(\mathbb{C})$. A basis of this set of
matrices can be constructed as follows. Consider the $(m_1 + m_2 +
m_3)^2$ canonical matrices $e_{I}{}^{J}, \ 1 \le I, J \le m_1 +m_2
+m_3 $. With the following convention for the indices $1 \le i,j
\le m_1, m_1 + 1 \le i^{\prime },j^{\prime }\le m_1+m_2, m_1 + m_2
+ 1 \le i^{\prime \prime },j^{\prime \prime }\le m_1+m_2 + m_3$,
the generators are given by

\begin{equation*}
\begin{array}{lll}
e_{i}{}^{j}\text{ for } \mathfrak{gl}(m_1), & e_{i^{\prime }}{}^{j^{\prime }}%
\text{ for } \mathfrak{gl}(m_2), & e_{i^{\prime \prime
}}{}^{j^{\prime
\prime }}\text{ for } \mathfrak{gl}(m_3), \\
e_{i}{}^{j^{\prime }} \text{ for }
\mathcal{M}_{m_1,m_2}(\mathbb{C}), &
e_{i^{\prime }}{}^{j^{\prime \prime }} \text{ for } \mathcal{M}_{m_2,m_3}(%
\mathbb{C}), & e_{i^{\prime \prime }}{}^{j} \text{ for } \mathcal{M}%
_{m_3,m_1}(\mathbb{C}), \\
e_{i^{\prime }}{}^{j} \text{ for }
\mathcal{M}_{m_2,m_1}(\mathbb{C}), &
e_{i^{\prime \prime }}{}^{j^{\prime }} \text{ for } \mathcal{M}_{m_3,m_2}(%
\mathbb{C}), & e_{i}{}^{j^{\prime \prime }} \text{ for } \mathcal{M}%
_{m_1,m_3}(\mathbb{C}). \\
&  &
\end{array}%
\end{equation*}

\noindent Writing  $\text{mat}(m_1,m_2,m_3) =
\text{mat}(m_1,m_2,m_3)_0
\oplus \text{mat}(m_1,m_2,m_3)_1 \oplus \text{mat}(m_1,m_2,m_3)_2$ and $%
\text{mat}_{\text{el}}(m_1,m_2,m_3) =
\text{mat}_{\text{el}}(m_1,m_2,m_3)_0
\oplus \text{mat}_{\text{el}}(m_1,m_2,m_3)_1$, we denote generically by $%
X_I{}^J$ the canonical generators of degree zero, $Y_I{}^J$ the
canonical generators of degree one, and $Z_I{}^J$ those of degree
two. With these conventions, the brackets read

\begin{eqnarray}  \label{gl3}
[X_{I}{}^{J}, X_{K}{}^{L}]&=& \delta^J{}_K X_I{}^L-\delta^L{}_I
X_K{}^J,
\notag \\
\left[X_{I}{}^{J}, Y_{K}{}^{L}\right]&=& \delta^J{}_K Y_I{}^L
-\delta^L{}_I
Y_K{}^J  \notag \\
\left[X_{I}{}^{J}, Z_{K}{}^{L}\right]&=& \delta^J{}_K Z_I{}^L
-\delta^L{}_I
Z_K{}^J,  \notag \\
\left\{Y_I{}^J,Y_K{}^L,Y_M{}^N\right\}&=& \delta^J{}_K
\delta^L{}_M X_I{}^N
+ \delta^N{}_I \delta^J{}_K X_M{}^L + \delta^L{}_M \delta^N{}_I X_K{}^J \\
&+& \delta^J{}_M \delta^N{}_K X_I{}^L + \delta^N{}_K \delta^L{}_I
X_M{}^J +
\delta^L{}_I \delta^J{}_M X_K{}^N,  \notag \\
\left\{Z_I{}^J,Z_K{}^L,Z_M{}^N\right\}&=& \delta^J{}_K
\delta^L{}_M X_I{}^N + \delta^N{}_I \delta^J{}_K X_M{}^L +
\delta^L{}_M \delta^N{}_I X_K{}^J
\notag \\
&+& \delta^J{}_M \delta^N{}_K X_I{}^L + \delta^N{}_K \delta^L{}_I
X_M{}^J + \delta^L{}_I \delta^J{}_M X_K{}^N.  \notag
\end{eqnarray}
This shows that $\text{mat}(m_1,m_2,m_3)$ (resp. $\text{mat}_{\text{el}%
}(m_1,m_2,m_3)$) is endowed with the structure of Lie algebra of
order three (resp. a structure of an elementary Lie algebra of
order three). In particular, when $m_1=m_2=m_3$ the algebra above
can be rewritten as $\text{mat}(m,m,m)= {\cal C}_3^1 \otimes
\mathfrak{gl}(m)$, with $e=\begin{pmatrix} 0&1&0 \\ 0&0&1
\\ 1&0&0
\end{pmatrix}$ being a faithful matrix representation of the
canonical generator of ${\cal C}_n^1$. }
\end{example}

The question to find appropriate variables to represent Lie
algebras of order $F$ is much more involved than for color
algebras. However, in some specific cases, we were able to find
appropriate variables (see N. Mohammedi, G. Moultaka and M. Rausch
de Traubenberg in \cite{flie-phys}), and it turns out that these
variables are strongly related to Clifford algebras of polynomials
\cite{cliff}. We will give another realization below.

\section{Color Lie algebras of order $F$}
\label{fcolor} Color Lie (super)algebras of order $F$ can be seen
as a synthesis of the two types of algebras introduced previously.
Indeed, for such algebras, we have simultaneously a binary product
associated with a commutation factor and an $F-$order product. The
latter is no more fully symmetric, but is also associated with the
commutation factor. In this section we focus on color Lie
(super)algebra of order 3.

\begin{definition}
Let $\Gamma$ be an abelian group and $N$ be a commutation factor,
$\g=\g_0 \oplus \g_1$    is an elementary  color Lie
(super)algebra of order 3 if

\begin{enumerate}
\item $\g_0=  \underset{a \in \Gamma}{\oplus}\g_{0,a}$ is a color
Lie (super)algebra.
 \item $\g_1= \underset{a \in
\Gamma}{\oplus}\g_{1,a}$ is a representation of $\g_0$. If $X \in
\g_0, Y \in \g_1$ are homogeneous elements, then $\l X,Y \r$
denotes the action of $X$ on $Y$.
 \item
There exists a $\g_0-$equivariant map $\lb .,.,. \rb : \g_1
\otimes \g_1 \otimes \g_1 \to \g_0$ such that for all $Y_1 \in
\g_{1,a},Y_2 \in \g_{1,b}, Y_3 \in \g_{1,c}$ we have $\lb
Y_1,Y_2,Y_3 \rb=N(a,b) \lb Y_2,Y_1,Y_3 \rb= N(b,c)\lb
Y_a,Y_c,Y_b\rb.$
 \item The following ``Jacobi identities'' hold:

\beqa \label{eq:jaccol}
\begin{array}{l}
\l X_1,\l X_2,X_3 \r \r = \l \l X_1, X_2 \r, X_3 \r +
 N(a,b) \l X_2, \l X_1 ,X_3 \r \r, \\
\forall (X_1,X_2,X_3) \in  \g_{0,a}\times
 \g_{0,b} \times \g_{0,c}, \\
\l X_1,\l X_2,Y_3 \r \r = \l \l X_1, X_2 \r, Y_3 \r +
 N(a,b) \l X_2, \l X_1 ,Y_3 \r \r,
\\
 \forall  X_1 \in \g_{0,a},
X_2 \in \g_{0,b},  Y_3 \in \g_{1,c}, \\
\l X, \lb Y_1,Y_2,Y_3 \rb \r =
\lb \l X_1,Y_1 \r, Y_2,Y_3 \rb, \\
+ N(a,b) \lb Y_1, \l X_1,Y_2 \r,Y_3 \rb + N(a,b+c) \lb Y_1,Y_2,\l
X_1, Y_3 \r \rb, \\
 \forall X \in \g_{0,a}, Y_1 \in \g_{1,b},Y_2 \in \g_{1,c}, Y_3 \in
\g_{1,d}
\\
%
0= \l Y_1,\lb Y_2,Y_3,Y_4 \rb \r + N(a,b+c+d) \l Y_2,\lb
Y_3,Y_4,Y_1 \r \rb +
\\
N(a,b+c+d) N(b,a+c+d) \l Y_3,\lb Y_4,Y_1,Y_2 \rb \r +\\
N(a,b+c+d) N(b,a+c+d) N(c,a+b+d) \l Y_4, \lb Y_1, Y_2, Y_3 \rb \r,\\
\forall Y_1 \in \g_{1,a}, Y_2 \in \g_{1,b},Y_3 \in \g_{1,c}, Y_4 \in \g_{1,d}
\end{array}
\nonumber \eeqa
\end{enumerate}
\end{definition}

We observe that if $\Gamma = \Gamma_0 + \Gamma_1$ is a
decomposition of $\Gamma$ with respect to its $\ZZ_2-$grading, as
seen in Section \ref{color}, and such that $\Gamma_1 = 0$ (resp.
$\Gamma_1 \ne 0$),  then $\g$ is called a color Lie algebra (resp.
superalgebra). Moreover, if $\Gamma=\mathbb Z_2$ and $N(1,1)=-1$
hold, the algebra $\g$ is called a Lie superalgebra of order
three.

\begin{definition}
A representation of an elementary color Lie (super)algebra of
order 3 is a linear map $\rho~: \g \to \text{End}(V)$ satisfying
the conditions
{
\begin{enumerate}
\item $\rho\left( \l X_1,X_2\r\right)= \rho(X_1) \rho(X_2) -N(a,b)
\rho(X_2) \rho(X_1)$, for all  $X_1 \in \g_{0,a}, X_2 \in
\g_{0,b}$; \item $\rho\left(\l X_1,Y_2\r\right)= \rho(X_1)
\rho(Y_2) -N(a,b) \rho(Y_2) \rho(X_1)$, for all  $X_1 \in
\g_{0,a}, Y_2 \in \g_{1,b}$; \item $\rho\left( \lb Y_1, Y_2, Y_3
\rb\right)= \rho(Y_1) \rho(Y_2) \rho(Y_3) + N(a,b) N(a,c)
\rho(Y_2) \rho(Y_3) \rho(Y_1) + N(b,c) N(a,c) \rho(Y_3) \rho(Y_1)
\rho(Y_2) + N(b,c) \rho(Y_1) \rho(Y_3) \rho(Y_2) + N(a,b)
\rho(Y_2) \rho(Y_1) \rho(Y_3)+ N(a,b) N(a,c) N(b,c) \rho(Y_3)
\rho(Y_2) \rho(Y_1) $, for all $Y_1 \in \g_{1,a}, Y_2 \in
\g_{1,b}, Y_3 \in \g_{1,c}.$
\end{enumerate}
}
\end{definition}

By construction, the vector space $V$ is graded and we have
$V=V_0\oplus V_1 \oplus V_2$ with $V_i=\underset{a \in
\Gamma}{\oplus}V_{i,a}$. Furthermore, each $V_{i,a}$ is a
$\g_{0,0}-$module and the inclusion relation $\rho(\g_{i,a})
V_{j,b} \subseteq V_{i+j,a+b}$ holds.\\

\begin{remark}
Let
 ${\cal A}= {\cal A}_0 \ \oplus\ {\cal A}_1 \  \oplus 
{\cal A}_{2}= \left(\underset {a \in \Gamma}{\oplus} {\cal
A}_{0,a}\right) \ \oplus \ \left(\underset {a \in \Gamma}{\oplus}
{\cal A}_{1,a}\right) \ \oplus  \left(\underset
{a \in \Gamma} {\oplus}{\cal A}_{2,a}\right) $ be an associative
$\mathbb Z_3\times \Gamma-$graded algebra
 with multiplication $\mu$. One can associate a color Lie
superalgebra of order three to ${\cal A}$ defining the products in a
similar manner as in Remarks  \ref{assos-color} and
\ref{assos-flie}. In this case, the Jacobi identities are also a
consequence of the associativity of the product $\mu$. Similarly,
if ${\cal A}$ is an associative $\Gamma-$algebra and ${\cal
C}_3^1$ the commutative three-dimensional algebra generated by a
primitive element $e$ such that $e^3=1$, the algebra ${\cal C}_3^1
\otimes {\cal A}= (1 \otimes {\cal A}) \ \oplus \ (e \otimes {\cal
A}) \ \oplus \  ( e^{2} \otimes {\cal A})$ is
associative and $\mathbb Z_3\times \Gamma-$graded, and therefore
leads to a color Lie algebra of order three.
\end{remark}

The examples of color Lie (super)algebras of order $F$ are
basically of two types: we can construct a  color Lie
(super)algebra of order $F$ from either a color Lie
(super)algebra or a Lie algebra of order $F$.

\begin{example}
\rm{ Let $\mathfrak{gl}(\{m\}_{\Gamma,N})=\underset{a \in
\Gamma}{\oplus}\g_a$ be the  color Lie (super)algebra of Example
\ref{sl} and let ${\cal C}_3^1$  be the generalised Clifford
algebra with canonical generator $e$, then
\begin{enumerate}
\item ${\cal C}_3^1 \otimes \g$ is a color Lie (super)algebra of
order 3; \item $\left<1,e\right> \otimes \g$ is an elementary
color Lie (super)algebra of order 3.
\end{enumerate}
For the second algebra, following the notations of Example
  \ref{color-gl} we denote $E^p{}_q$  a basis of
$\mathfrak{gl}(\{m\}_{\Gamma,N})$ and $X^p{}_q={\mathbf 1}\otimes
E^p{}_q$ (resp.  $Y^p{}_q=e\otimes E^p{}_q$) a basis of ${\mathbf
1} \otimes \mathfrak{gl}(\{m\}_{\Gamma,N})$ (resp. of $e \otimes
\mathfrak{gl}(\{m\}_{\Gamma,N})$). Then, the trilinear brackets
read

{\small \beqa &&\lb Y^p{}_q, Y^r{}_s,Y^t{}_u\rb= \delta_q{}^r
\delta_s{}^t X^p{}_u +\nonumber \\ &&+
N(\gr(q)-\gr(p),\gr(s)-\gr(r))\ N(\gr(q)-\gr(p),\gr(u)-\gr(t)) \
\delta_s{}^t \delta_u{}^p X^r{}_q \nonumber \\&&+
N(\gr(q)-\gr(p),\gr(u)-\gr(t)) \
N(\gr(s)-\gr(r),\gr(u)-\gr(t))\ \delta_u{}^p \delta_q{}^r X^t{}_s \nonumber \\
 & &+
N(\gr(s)-\gr(r), \gr(u)-\gr(t))\ \delta_q{}^t \delta_u{}^r X^p{}_s
+
N(\gr(q)-\gr(p),\gr(s)-\gr(r))\ \delta_s{}^p \delta_q{}^t X^r{}_u \nonumber \\
&&
+ N(\gr(q)-\gr(p),\gr(s)-\gr(r)) \
N(\gr(q)-\gr(p),\gr(u)-\gr(t)) \ N(\gr(s)-\gr(r),\gr(u)-\gr(t)) \
\delta_u{}^r \delta_s{}^p X^t{}_q. \nonumber \eeqa }
}
\end{example}

\begin{example}
\rm{ Let $\g$ be an arbitrary (elementary) Lie algebra of order 3
 and let
${\cal C}_n^2$ be  the generalized Clifford algebra with canonical
generators $e_1,e_2$, then ${\cal C}_n^2 \otimes \g$ is a color
Lie algebra of order $3$ with abelian group $\mathbb{Z}_n \times
\mathbb{Z}_n$ and commutation factor
$N((a_1,b_1),(a_2,b_2))=q^{a_1 b_2 -a_2 b_1}.$ Suppose that an
elementary Lie algebra of order 3 $\g=\g_0\oplus\g_1$ is given.
Denote $\left\{X_\alpha, \alpha=1,\cdots,
\text{dim}(\g_0)\right\}$ (resp.  $\left\{Y_m, m=1,\cdots,
\text{dim}( \g_1)\right\}$) a basis of $\g_0$ (resp. $\g_1$.) such
that

\beqa \left[X_\alpha, X_\beta\right]=f_{\alpha \beta}{}^\gamma
X_\beta, \
\left[X_\alpha, Y_m\right]=R_{\alpha m}{}^n Y_n, \
\left\{Y_m,Y_n,Y_p\right\}= Q_{mnp}{}^\alpha X_\alpha. \nonumber
\eeqa

\noi Define $\g_{0,(a,b)}=\left\{X_\alpha{}^{(a,b)}
=\rho_{(a,b)}\otimes X^\alpha\right\}$ and
$\g_{1,(a,b)}=\left\{Y_m^{(a,b)}=\rho_{(a,b)} \otimes Y^m\right\}$
(with $\rho_{(a,b)}=e_1^a e_2^b$), we thus have $\underset{(a,b)
\in \mathbb Z_n \times \mathbb Z_n}{\bigoplus}{\g_{0,(a,b)}} \
\underset{(a,b) \in \mathbb Z_n \times \mathbb
Z_n}{\bigoplus}{\g_{1,(a,b)}}$ is an elementary color Lie algebra
of order 3 with brackets:

\beqa \l X_\alpha^{(a,b)}, X_\beta^{(c,d)}\r&=&
q^{-bc}f_{\alpha \beta}{}^\gamma X_\beta^{(a+c,b+d)}, \nonumber \\
\l X_\alpha^{(a,b)}, Y_m^{(c,d)}\r &=& q^{-bc}R_{\alpha m}{}^n
Y_n^{(a+c,b+d)},
\nonumber \\
\lb Y_m^{(a,b)},Y_n^{(c,d)},Y_p^{(e,f)}\rb &=&
q^{-b(c+e)-de}Q_{mnp}{}^\alpha X_\alpha^{(a+c+e,b+d+f)}. \nonumber
\eeqa

\noi As in Example \ref{tensor1} this can be extended for $\Gamma
= \mathbb{Z}_n^N$ and for color Lie superalgebras of order three.
}
\end{example}

\begin{example}
\rm{ This example is a synthesis of Examples \ref{color-gl} and
\ref{flie-gl}. Consider three abelian groups $\Gamma_1, \Gamma_2,
\Gamma_3$   and corresponding commutation factors $N_1,N_2,N_3$.
Then we define on the group $\Gamma = \Gamma_1 \times \Gamma_2
\times \Gamma_3$ the commutation factor $N(\vec a, \vec b)=
N_1(a_1,b_1) N_2(a_2,b_2) N_3(a_3,b_3)$, with $\vec a =
(a_1,a_2,a_3) \in \Gamma$ {\it etc}. Let $m_i=m_{i,1} + \cdots +
m_{i,n_i}$ with $i=1,2,3$ be three integers and
$\mathfrak{gl}(\left\{m_1\right\}_{\Gamma_1,N_1})$,
$\mathfrak{gl}(\left\{m_2\right\}_{\Gamma_2,N_2})$,
$\mathfrak{gl}(\left\{m_3\right\}_{\Gamma_3,N_3})$ be three color
Lie (super)algebras as in Example \ref{color-gl}. Introduce now
the  matrices ${\cal M}_{m_1,m_2}(\mathbb C)$ in the fundamental
representation of $\mathfrak{gl}
(\left\{m_1\right\}_{\Gamma_1,N_1})$ and in the dual of the
fundamental
 representation of $\mathfrak{gl}(\left\{m_2\right\}_{\Gamma_2,N_2})$.
In a similar way as in Example \ref{flie-gl}, we consider the set
of matrices ${\cal M}_{m_2,m_3}(\mathbb C)$ and ${\cal
M}_{m_3,m_1}(\mathbb C)$ and
 the algebra

$$
\g=\begin{pmatrix}
 \mathfrak{gl}(\left\{m_1\right\}_{\Gamma_1,N_1})&
{\cal M}_{m_1,m_2}(\mathbb C)&0 \\
0&\mathfrak{gl}(\left\{m_2\right\}_{\Gamma_2,N_2})&
{\cal M}_{m_2,m_3}(\mathbb C) \\
{\cal M}_{m_3,m_1}(\mathbb C)&0&
\mathfrak{gl}(\left\{m_3\right\}_{\Gamma_3,N_3})
\end{pmatrix}=
\begin{pmatrix}
X_i{}^j&Y_i{}^{j'}&0\\
0&X_{i'}{}^{j'}&Y_{i'}{}^{j''}\\
Y_{i''}{}^j&0&X_{i''}{}^{j''}
\end{pmatrix}
$$

\noi with the notations of Examples \ref{flie-gl}. It is obviously
a color Lie (super)algebra of order three. The various brackets
are similar to those of Example \ref{flie-gl} and \ref{color-gl}.
We just give a few brackets for completeness:

$$
\begin{array}{lll}
\l X_i{}^j,Y_k{}^{\ell'}\r&=&\delta_k{}^j Y_i{}^{\ell'},\\
\l X_{i'}{}^{j'},Y_k{}^{\ell'}\r&=&-
N_2(\gr(i')-\gr(j'), -\gr(\ell'))\delta_{i'}{}^{\ell'} Y_k{}^{j'},\\
\lb Y_i{}^{j'}, Y_{k'}{}^{\ell''},Y_{m''}{}^n\rb&=&
\delta^{j'}{}_{k'} \delta^{\ell''}{}_{m''} X^i{}_n+
N_2(-\gr(j'),\gr(k'))N_1(\gr(i),-\gr(n))\delta^{\ell''}{}_{m''}
\delta^n{}_i
X_{k'}{}^{j'} \\
&+& N_3(-\gr(\ell'), \gr(m''))N_1(\gr(i),-\gr(n))\delta^n{}_i
\delta^{j'}{}_{k'} X_{m''}{}^{\ell''}.
\end{array}
$$
}
\end{example}

\medskip
To conclude this section, we now show that there is an analogous
of the decoloration theorem established in Section \ref{color}. As
done there, one can proceed in two different (but related) ways.
To set up the main result of this theorem, consider $\g=
\left(\underset{a \in \Gamma }{\oplus} \g_{0,a}\right)\ \oplus \
 \left(\underset{a \in \Gamma }{\oplus} \g_{1,a}\right)$ a color
Lie superalgebra with grading abelian group $\Gamma$ and
commutation factor $N$. In the second approach we directly
associate to $\g$ a Lie algebra of order three,  in the same
manner as in Section \ref{color} by considering the algebra
$\Lambda = \underset{a \in \Gamma} {\oplus} \Lambda_a$, where
$\Lambda_a$ is generated by the variables $\theta^a_i$ satisfying
equation \eqref{theta}. The algebra

$$
\g(\Lambda)_0= \left(\underset{a \in \Gamma} {\oplus}\Lambda_{-a}
\otimes \g_{0,a}\right) \ \oplus \
 \left(\underset{a \in \Gamma} {\oplus}\Lambda_{-a} \otimes
 \g_{1,a}\right)
$$

\noi is a Lie algebra of order three. This is proved in a similar
way as in Section \ref{color} and only the trilinear bracket are
slightly different. Let $X \in \g_a, Y \in \g_{b}, Z \in \g_c$ and
$\theta \in \Lambda_{-a}, \psi \in \Lambda_{-b}, \eta \in
\Lambda_{-c}$. It is not difficult to check that

$$
\left\{ \theta \otimes X, \psi \otimes Y, \eta \otimes Z \right\}=
\theta \psi \eta \otimes \lb X,Y,Z \rb \in \Lambda_{-a-b-c}
\otimes \g_{0,a+b+c}.
$$

\noi The Jacobi identities involving trilinear brackets are a
consequence of the identity $[\theta_1\otimes Y_1, \{\theta_2
\otimes Y_2, \theta_3 \otimes Y_3, \theta_4 \otimes
Y_4\}]=\theta_1 \theta_2 \theta_3 \theta_4 \l Y_1, \lb Y_2,
Y_3,Y_4\rb\r$ (for any $Y_i \in \g_{1,a_i}, \theta_i \in
\Lambda_{-a_i}, i=1,\cdots,4$) together with the associativity of
the product in $\Lambda$ and equation \eqref{theta}.
This proves that $\g(\Lambda)_0$ is a Lie algebra of order three.\\

In the first correspondence we introduce $N_+$ as in Section
\ref{color} and the variables $e_a$ as in Proposition
\ref{decoloration} satisfying equations \eqref{ee} and
\eqref{sigma-prop}. Recall that the last property ensures that the
product  is associative. Then the algebra $\tilde \g =
\left(\underset{a \in \Gamma }{\oplus} e_{-a} \otimes
\g_{0,a}\right)\ \oplus \
 \left(\underset{a \in \Gamma }{\oplus}e_{-a} \otimes \g_{1,a}\right)$
is a Lie (super)algebra of order three. The proof goes along the
same lines as in Proposition \ref{decoloration}. For the bilinear
part the proof is the same as in Proposition \ref{decoloration}.
For the cubic bracket, if we take $X \in \g_a$, $Y \in \g_b$ and
$Z \in \g_a$, a simple calculation shows, using condition
\eqref{sigma-prop}, the explicit structure of the trilinear
bracket:

{\small \beqa &&
\hskip -.5truecm
\left\{ e_{-a} \otimes X, e_{-b} \otimes Y, e_{-c}
\otimes Z \right\}_\pm= (e_{-a} \otimes X)( e_{-b} \otimes Y)(
e_{-c} \otimes Z)+
\nonumber \\\hskip -.5truecm
&&(-1)^{ |X|(|Y| +|Z|)}(e_{-b} \otimes Y)(
e_{-c} \otimes Z)( e_{-a} \otimes X)
+ (-1)^{|Z|(|X| +|Y|)}  (e_{-c} \otimes Z)( e_{-a} \otimes X)(
e_{-b} \otimes Y)
\nonumber \\\hskip -.5truecm
&&+ (-1)^{|Z| |Y|} ( e_{-a} \otimes X)(  e_{-c} \otimes Z )(e_{-b}
\otimes Y) + (-1)^{|Y||X|}  (e_{-b} \otimes Y) (  e_{-a} \otimes
X)( e_{-c} \otimes Z)
\nonumber \\\hskip -.5truecm
&&+ (-1)^{|X| |Y| + |X||Z| + |Y||Z|}  (e_{-c} \otimes Z)(e_{-b}
\otimes Y)(e_{-a} \otimes X)
\nonumber \\\hskip -.5truecm
&&=\sigma(-a,-b) \sigma(-a-b,-c)e_{a + b +c} \otimes
 \lb X,Y,X \rb \in e_{-a-b-c} \otimes
\g_{0,a+b+c}, \nonumber \eeqa }

\noi where $|X|$ denotes the degree of $X$ with respect to the
$\mathbb Z_2$ grading of $\Gamma= \Gamma_0 \oplus \Gamma_1$ {\it
etc.}. Since we have $e_a e_b - N_+^{-1}(a,b) e_b e_a =0$, and the
algebra $G$ is associative,
 there is no need to prove the Jacobi identities involving
trilinear brackets (the proof being the same as in previous
cases). This illustrates how we can associate a Lie (super)algebra
of order three to a color Lie (super)algebra of order three. These
results, taken together, can be resumed in uniform manner in the
following decoloration theorem:

\begin{proposition}
\label{decol-flie} There is an isomorphism between color Lie
(super)algebras of order three and Lie (super)algebras of order
three.
\end{proposition}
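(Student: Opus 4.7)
The plan is to mirror the strategy of Proposition \ref{decoloration} exactly, extended to accommodate the additional trilinear bracket of a color Lie (super)algebra of order three. I would use the same multiplier data: given a color Lie (super)algebra of order three $\g= \left(\underset{a \in \Gamma}{\oplus}\g_{0,a}\right)\oplus\left(\underset{a \in \Gamma}{\oplus}\g_{1,a}\right)$ with commutation factor $N$ (or its associated $N_+$ in the superalgebra case), introduce the auxiliary associative $\Gamma$-graded algebra $G=\underset{a\in\Gamma}{\oplus}\mathbb{C}e_a$ with $e_a e_b = \sigma(a,b) e_{a+b}$ where $\sigma$ satisfies \eqref{sigma-prop} together with $\sigma(a,b)\sigma^{-1}(b,a)=N_+^{-1}(a,b)$. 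Then form $\tilde\g = \left(\underset{a \in \Gamma}{\oplus} e_{-a}\otimes\g_{0,a}\right)\oplus \left(\underset{a \in \Gamma}{\oplus}e_{-a}\otimes\g_{1,a}\right)$. For the bilinear part of the bracket, the argument of Proposition \ref{decoloration} carries over verbatim, so $\tilde\g_0$ together with the $\g_0$-action on $\tilde\g_1$ already has the structure required by points 1 and 2 of the definition of a Lie (super)algebra of order three.

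The new content concerns the trilinear bracket. For homogeneous $X\in\g_{1,a}$, $Y\in\g_{1,b}$, $Z\in\g_{1,c}$, I would define $\{\,\cdot\,,\,\cdot\,,\,\cdot\,\}_\pm$ on $\tilde\g$ as the symmetrized product in the associative model $G\otimes\g$, and compute directly. As shown in the text preceding the proposition, using only associativity of $\sigma$ and the commutation rule $e_a e_b = N_+^{-1}(a,b) e_b e_a$, the six symmetrized terms collapse to $\sigma(-a,-b)\sigma(-a-b,-c)\, e_{-a-b-c}\otimes\lb X,Y,Z\rb$, which sits in $\tilde\g_{0,a+b+c}$ as required. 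The $\tilde\g_0$-equivariance and the full $S_3$-symmetry of this new bracket then follow by transporting the color $N$-symmetry of $\lb\cdot,\cdot,\cdot\rb$ through the cocycle $\sigma$, exactly paralleling the bilinear case.

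The main obstacle is the fourth ``Jacobi'' identity, the four-term relation among elements of $\g_1$. One must show that the color factors $N(a,b+c+d)$, $N(a,b+c+d)N(b,a+c+d)$, and so on, which appear in the defining identity for $\g$, are precisely absorbed by the $\sigma$-cocycle factors when each $Y_i$ is replaced by $e_{-a_i}\otimes Y_i$, leaving behind only the plain $\mathbb{Z}_2$-graded signs $(-1)^{|Y_i||Y_j|}$ dictated by the decomposition $\Gamma=\Gamma_0\oplus\Gamma_1$. This is combinatorial rather than conceptual; the only algebraic input needed is the combination of \eqref{sigma-prop} with $\sigma(a,b)\sigma^{-1}(b,a)=N_+^{-1}(a,b)$, which the authors have already shown suffices in the bilinear setting.

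As an alternative (and in my view more transparent) route, I would instead run the Grassmann-hull construction, using the $\theta^a_i$ satisfying \eqref{theta}, and set $\g(\Lambda)_0= \left(\underset{a\in\Gamma}{\oplus}\Lambda_{-a}\otimes\g_{0,a}\right)\oplus\left(\underset{a\in\Gamma}{\oplus}\Lambda_{-a}\otimes\g_{1,a}\right)$. Here the trilinear bracket $\{\theta\otimes X,\psi\otimes Y,\eta\otimes Z\}=\theta\psi\eta\otimes\lb X,Y,Z\rb$ is essentially automatic, and the four-term identity with four odd elements reduces, via $[\theta_1\otimes Y_1,\{\theta_2\otimes Y_2,\theta_3\otimes Y_3,\theta_4\otimes Y_4\}]=\theta_1\theta_2\theta_3\theta_4\otimes \l Y_1,\lb Y_2,Y_3,Y_4\rb\r$ together with the $N$-commutation rules on the $\theta$'s and the associativity of $\Lambda$, to the original four-term identity for $\g$. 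Either route establishes that $\tilde\g$ (respectively $\g(\Lambda)_0$) is a Lie (super)algebra of order three, and the construction is clearly reversible via the tensor-product examples of Section \ref{fcolor}, giving the claimed isomorphism.
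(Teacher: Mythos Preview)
Your proposal is correct and follows essentially the same two-pronged approach as the paper: the discussion immediately preceding the proposition \emph{is} the paper's proof, and it carries out precisely the $\sigma$-multiplier/$e_a$ construction and the alternative $\Lambda$/$\theta$ Grassmann-hull construction you describe, with the same key identities $\{e_{-a}\otimes X,e_{-b}\otimes Y,e_{-c}\otimes Z\}_\pm=\sigma(-a,-b)\sigma(-a-b,-c)e_{-a-b-c}\otimes\lb X,Y,Z\rb$ and $[\theta_1\otimes Y_1,\{\theta_2\otimes Y_2,\theta_3\otimes Y_3,\theta_4\otimes Y_4\}]=\theta_1\theta_2\theta_3\theta_4\otimes\l Y_1,\lb Y_2,Y_3,Y_4\rb\r$. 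The only minor point is that the paper disposes of the four-term Jacobi identity in the $e_a$ route simply by invoking associativity of $G$ together with $e_ae_b-N_+^{-1}(a,b)e_be_a=0$, rather than by the explicit cocycle bookkeeping you outline, but this is a difference in presentation rather than substance.
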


This theorem can be seen as a Grassmann-hull that replaces a Lie
superalgebra by a Lie algebra introducing Grassmann variables. It
can be further be seen as a
kind of Jordan-Wigner transformation in physical applications.\\
To finish this section let us observe the following. The
decoloration theorem above and that of Section \ref{color} seem to
indicate that color Lie (super)algebras (resp. color Lie
(super)algebras of order three) do not really constitute new
objects, since they are isomorphic to Lie algebras (resp. Lie
algebras of order three). In fact, as a consequence of these
theorems, for any representation ${\cal R}$ of a color algebra
$\g$ we can construct, by means of the procedure above, an
isomorphic representation of the associated non-color algebra. The
converse of this procedure also holds. It should however taken
into account that this property does not imply that all
representations of color (resp. non-color) algebras are obtained
from representations of the corresponding non-color (resp. color)
algebras.\footnote{In particular, this way to construct
representations does not preserve dimensions, as follows at once
from the tensor products.}

\section{Quons and realization of color Lie (super)algebras of order 3}

Quons where conceived in particle statistics as one of the
alternatives to construct theories were either the Bose or Fermi
statistics are violated by a small amount \cite{quon2}. Although
observables related to particles subjected to this type of
intermediate statistics fail to have the usual locality
properties, their validity in nonrelativistic field theory and
free field theories obeying the TCP theorem has been shown. In
this section we prove that color Lie algebras of order 3 admitting
a finite dimensional linear representation can be realized by
means of quon algebras for the important case $q=0$. This result
is a generalisation of various properties that are well known for
the usual boson and fermion algebras.

\medskip

Let $-1\le q \le 1$ and consider the variables $a_i,a^i,\
i=1,\cdots n$. We define the $q$-mutator or quon-algebra by means
of \beq a_i a^j -q a^j a_i=\delta_i{}^j, \label{qalg}\eeq where no
relation between variables of the same type are postulated. The
(complex) quon algebra is denoted by ${\cal Q}_{n,q}(\mathbb
C)$.\footnote{The quon algebra originally introduced by Greenberg
is real, such that the Fock space is a Hilbert space.} For the two
extreme values of $q$ we recover the well known statistics. If
$q=-1$, together with the relations $a^ia^j+a^ja^i=0$ and
$a_ia_j+a_ja_i=0$, the quon algebra reduces to the fermion
algebra. For $q=1$, together with $a^ia^j-a^ja^i=0$ and
$a_ia_j-a_ja_i=0$, it reproduces the boson algebra. Therefore, the
quon algebra can be interpreted as an interpolation between Bose
and Fermi statistics.\footnote{The relations $a_i a_j - q
a_jq_i=0$ only  hold when the additional constraint $q^2=1$ is
satisfied \cite{quons}.}
\begin{lemma}\label{poly}
Let $M_1,\cdots,M_k$ be $(n\times n)$ complex matrices satisfying
a polynomial relation $P(M_1,\cdots,M_k)=0$. Then there exists $k$
elements ${\cal M}_k \in {\cal Q}_{n,0}(\mathbb C)\;
(k=1,\cdots,n)$ such that $P({\cal M}_1,\cdots,{\cal M}_n)=0$.
\end{lemma}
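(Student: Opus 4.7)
The plan is to embed the full matrix algebra $M_{n}(\mathbb{C})$ into ${\cal Q}_{n,0}(\mathbb{C})$ as an algebra, and then simply transport the polynomial identity across this embedding.

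First, I would exploit that at $q=0$ the defining relation \eqref{qalg} collapses to $a_{i} a^{j}=\delta_{i}{}^{j}$. This immediately suggests that the quadratic elements $\mathcal{E}_{i}{}^{j}:=a^{i}a_{j}$ should behave as matrix units, and a one-line calculation confirms this:
$$\mathcal{E}_{i}{}^{j}\,\mathcal{E}_{k}{}^{\ell}=a^{i}(a_{j}a^{k})a_{\ell}=\delta_{j}{}^{k}\,\mathcal{E}_{i}{}^{\ell},$$
which is precisely the multiplication table of the canonical basis $E_{i}{}^{j}$ of $\mathfrak{gl}(n,\mathbb{C})$.

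Next, for each of the given matrices I would set $\mathcal{M}_{\alpha}:=\sum_{i,j}(M_{\alpha})_{i}{}^{j}\,a^{i}a_{j}$. By the previous computation the $\mathbb{C}$-linear map $\phi: M_{\alpha}\mapsto \mathcal{M}_{\alpha}$ is multiplicative, hence defines a (non-unital) algebra homomorphism $\phi: M_{n}(\mathbb{C})\to{\cal Q}_{n,0}(\mathbb{C})$. Applying $\phi$ to the hypothesis $P(M_{1},\ldots,M_{k})=0$ then yields at once $P(\mathcal{M}_{1},\ldots,\mathcal{M}_{k})=\phi\bigl(P(M_{1},\ldots,M_{k})\bigr)=0$ in ${\cal Q}_{n,0}(\mathbb{C})$, which is the claim.

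The only genuine subtlety — and really the main (if mild) obstacle — is that $\phi$ is not unital: $\phi(\mathbb{I}_{n})=\sum_{i}a^{i}a_{i}$ is an idempotent of ${\cal Q}_{n,0}(\mathbb{C})$ but not its unit $1$. Consequently, if $P$ carries a nonzero constant term, the symbol $1$ must be read on both sides with the same convention (namely $\mathbb{I}_{n}$ on the matrix side and the idempotent $\sum_{i}a^{i}a_{i}$ on the quon side), or equivalently one restricts attention to polynomials without constant term. Under this reading the homomorphism property takes care of the identity automatically and the lemma follows.
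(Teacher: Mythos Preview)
Your proof is correct and follows essentially the same route as the paper: both use the $q=0$ relation $a_{i}a^{j}=\delta_{i}{}^{j}$ to verify that the elements $a^{i}a_{j}$ multiply as matrix units, then transport each $M_{\alpha}$ to $\mathcal{M}_{\alpha}=a^{i}(M_{\alpha})_{i}{}^{j}a_{j}$ via the resulting embedding $M_{n}(\mathbb{C})\hookrightarrow\mathcal{Q}_{n,0}(\mathbb{C})$. Your remark on the non-unital nature of $\phi$ and the reading of a possible constant term is a worthwhile caveat that the paper does not make explicit.
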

\begin{proof}
Given two arbitrary generators $a_i, a^i$ of ${\cal
Q}_{n,0}(\mathbb C)$, by equation \ref{qalg} we have $a_i
a^j=\delta_i{}^j$. This means in particular that the $n^2$
elements $e^i{}_j$ defined by $e^i{}_j=a^ia_j, 1 \le i,j\le n$ of
${\cal Q}_{n,0}$ satisfy the relation $e^i{}_j
e^k{}_\ell=\delta_j{}^k e^i{}_\ell$. Denoting by $E^i{}_j$ the
canonical generators of ${\cal M}_n(\mathbb C)$ (the $(n \times
n)$ complex matrices), the mapping $f: {\cal M}_n(\mathbb C)
 \to {\cal Q}_{n,o}$ defined by $f(E^i{}_j)=e^i{}_j $
turns out to be a injection. Therefore, since there is no kernel,
 the elements ${\cal
M}_k=a^i (M_k)_i{}^ja_j \in {\cal Q}_{n,0}(\mathbb C)$ have to
satisfy the same relations as the matrices $M_k$. Thus $P({\cal
M}_1,\cdots,{\cal M}_n)=0$.
\end{proof}

\bigskip
The quon algebra with $q=0$ has been studied in detail by
Greenberg, and constitutes an example of ``infinite statistics''
\cite{quon2}. It was moreover shown there that the $q=0$ operators
can be used as building blocks for representations in the general
$|q|\neq 1$ case. We next show that, under special circumstances,
color Lie algebras of order 3 naturally embed into a $q=0$ quon
algebra.
\begin{theorem}
Let $\Gamma$ be an abelian group, $N$ a commutation factor and
$\g$ be a color Lie (super)algebra of order 3. If $\g$ admits a
finite dimensional matrix representation, then $\g$ can be
realized by a quon algebra with $q=0$.
\end{theorem}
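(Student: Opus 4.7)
The plan is to reduce the theorem directly to Lemma \ref{poly} via the given finite-dimensional representation. First, I would fix a matrix representation $\rho: \g \to \mathrm{End}(\CC^n)$, choose a homogeneous basis $\{T_\alpha\}$ of $\g$, and set $M_\alpha = \rho(T_\alpha) \in \mathcal{M}_n(\CC)$. The axioms of a representation of a color Lie (super)algebra of order three, spelled out in the definition given in Section \ref{fcolor}, express every bracket as a polynomial relation among the generators: the binary bracket produces $\rho(\l X,Y\r) = \rho(X)\rho(Y) - N(a,b)\rho(Y)\rho(X)$, and the trilinear bracket produces the six-term sum of ordered products weighted by the commutation factors $N(a,b), N(a,c), N(b,c)$. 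Hence the collection $\{M_\alpha\}$ satisfies a family of polynomial identities determined by the structure of $\g$.

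Next, I would invoke the construction used in the proof of Lemma \ref{poly}. That lemma produces an injective algebra homomorphism $f: \mathcal{M}_n(\CC) \to {\cal Q}_{n,0}(\CC)$ defined on canonical generators by $E^i{}_j \mapsto e^i{}_j = a^i a_j$, relying crucially on $a_i a^j = \delta_i{}^j$ (the $q=0$ relation). Setting $\mathcal{M}_\alpha = f(M_\alpha) = a^i (M_\alpha)_i{}^j a_j \in {\cal Q}_{n,0}(\CC)$ and using that $f$ is a homomorphism, every polynomial identity satisfied by the matrices $M_\alpha$ is automatically satisfied by the operators $\mathcal{M}_\alpha$. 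Consequently the assignment $T_\alpha \mapsto \mathcal{M}_\alpha$ respects both the $N$-graded commutators and the symmetrized trilinear bracket of $\g$, and therefore gives the sought-for realization inside the quon algebra.

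The main obstacle is conceptual rather than computational: one needs to recognize that all defining relations of a color Lie (super)algebra of order three, including those twisted by the commutation factor $N$, genuinely have the form of polynomial identities in the generators, so that Lemma \ref{poly} applies without modification. A minor subtlety is that when $\rho$ fails to be faithful the procedure realizes only $\g/\ker\rho$ rather than $\g$ itself; this does not affect the statement, which merely asserts realizability, and one may always pass to an injective representation when available. No grading compatibility on the quon side needs to be imposed, since the theorem only requires the algebraic brackets of $\g$ to be reproduced in ${\cal Q}_{n,0}(\CC)$.
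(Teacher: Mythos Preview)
Your proof is correct and follows the same core strategy as the paper: transport the polynomial identities satisfied by the representation matrices into the quon algebra via the injection $f:\mathcal{M}_n(\CC)\to{\cal Q}_{n,0}(\CC)$ furnished by Lemma~\ref{poly}. The paper's argument differs only in that it first decomposes the representation space as $V=V_0\oplus V_1\oplus V_2$ according to the $\ZZ_3$-grading (and further by $\Gamma$), and correspondingly uses three quon algebras ${\cal Q}_{n_i,0}(\CC)$ subject to the cross-relation $a_{m\,i}\,a_n{}^j=0$ for $m\neq n$; this extra bookkeeping endows the quon realization itself with a compatible $\ZZ_3\times\Gamma$-grading, so that the realized operators ${\cal M}_\alpha{}^{(a)}$ and ${\cal N}_m{}^{(a)}$ are homogeneous of the expected degree. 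Since three $q=0$ quon algebras with cross-annihilation are nothing other than the single ${\cal Q}_{n,0}(\CC)$ with $n=n_0+n_1+n_2$ that you use (the relation $a_i a^j=\delta_i{}^j$ already forces $a_i a^j=0$ for $i\neq j$), your argument is a streamlined version of the paper's, sacrificing only the explicit graded structure on the quon side---which, as you note, the theorem statement does not literally require.
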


\begin{proof}
Suppose that the decomposition $\g= \g_0 \oplus \g_1=\underset{a
\in \Gamma} \oplus \g_{0,a} \underset{a \in \Gamma} \oplus
\g_{1,a}$ with respect to the abelian group $\Gamma$ is given. Let
$X_\alpha^{(a)}$ be a basis of $\g_{0,a}$ and $Y_m^{(a)}$ be a
basis of $\g_{1,a}$ such that the following relations hold: \beq
\label{fcol}
\begin{array}{l}
\l X_\alpha{}^{(a)}, X_\beta{}^{(b)}\r = C^{(a,b)}{}_{\alpha
\beta}{}^\gamma X^{(a+b)}_\gamma, \quad \l X_\alpha{}^{(a)},
Y_m{}^{(b)}\r = R^{(a,b)}{}_{\alpha m}{}^n
Y^{(a+b)}_n,  \\
\lb Y_m^{(a)}, Y_n^{(b)}, Y_p^{(c)}\rb =
Q^{(a,b,c)}{}_{mnp}{}^\alpha X_\alpha^{(a+b+c)}. \end{array} \eeq
Let $\rho$ be a $n$-dimensional representation of $\g$ and let
$M_\alpha^{(a)}=\rho(X_\alpha{}^{(a)}), \
N_m^{(a)}=\rho(Y_m{}^{(a)})$ denote the corresponding transformed
basic elements. Then the representation space $V$ on which the
matrices $M$ and $N$ act satisfies the decomposition $V=\underset{
a \in \Gamma}{\oplus} V_{0,a} \underset{ a \in \Gamma}{\oplus}
V_{1,a} \underset{ a \in \Gamma}{\oplus} V_{2,a} $. Now, since the
inclusions $M_{\alpha}^{(a)} V_{i,b} \subseteq V_{i,a+b}$,
$N_{m}^{(a)} V_{i,b} \subseteq V_{i+1,a+b}$ are satisfied, we can
find a basis of $V$ such that
$V=V_0 \oplus V_2 \oplus  V_1,$
 {\it i.e.}, with respect to the grading group $\mathbb Z_3$, the
block $V_i=\text{dim } \underset{ a \in \Gamma}{\oplus} V_{i,a}$
is of degree $i$, for $i=0,1,2$. With respect to this basis, the
matrices $M$ and $N$ can be rewritten as
$$M_\alpha{}^{(a)} =\begin{pmatrix} M_0{}_\alpha{}^{(a)}&0&0 \cr
                                   0&M_2{}_\alpha{}^{(a)}&0 \cr
                                   0&0&M_1{}_\alpha{}^{(a)}
                     \end{pmatrix}, \ \
N_m{}^{(a)} =\begin{pmatrix}0& M_{0-2}{}_m{}^{(a)}&0 \cr
                             0&0&M_{2-1}{}_\alpha{}^{(a)} \cr
                             M_{1-0}{}_\alpha{}^{(a)}&0&0
                     \end{pmatrix}.
$$
Let $n_i = \text{dim } V_i, \ i=1,2,3$, where obviously $n_0 + n_1
+n_2=n$. We denote by
 $v=\left(v_{i_{0}}, v_{i_{1}}, v_{i_{2}} \right)^{T}$
 the components of the vector $v \in V$ ($1 \le i_a \le n_a, \
 a=1,2,3$), and  the matrix elements of $M$ and $N$:
$\left(M_0{}_\alpha{}^{(a)}\right)_{i_0}{}^{j_0},
\left(M_1{}_\alpha{}^{(a)}\right)_{i_1}{}^{j_1},
\left(M_2{}_\alpha{}^{(a)}\right)_{i_2}{}^{j_2}$,
$\left(N_{0-2}{}_m{}^{(a)}\right)_{i_0}{}^{j_2},$
$\left(N_{2-1}{}_m{}^{(a)}\right)_{i_2}{}^{j_1},$
$\left(N_{1-0}{}_m{}^{(a)}\right)_{i_1}{}^{j_0}.$ From now on, we
adopt the convention that an index in the form $i_a, a=0,1,2$ is
of degree $a$ with respect to the grading group $\mathbb Z_3$.
Furthermore,  using the same notations as in Section \ref{color}
with respect to the grading group $\Gamma$, $v_{i_a}$ is of degree
$\gr(i_a)$. This in particular implies some relations for the
matrix elements of $M$ and $N$. For instance, considering the
matrix element $\left(N_{2-1}{}_m{}^{(a)}\right)_{i_2}{}^{j_1}$,
we have $a=\gr(i_2) -\gr(j_1)$, etc. Consider now three series of
quons ${\cal Q}_{n_0,0}(\mathbb C)=
\left<a_0{}_i,a_0{}^i,i=1,\cdots,n_0\right>$,\; ${\cal
Q}_{n_1,0}(\mathbb C)=
\left<a_1{}_i,a_1{}^i,i=1,\cdots,n_1\right>$ and ${\cal
Q}_{n_2,0}(\mathbb C)=
\left<a_2{}_i,a_2{}^i,i=1,\cdots,n_2\right>$ such that for any $m
 \ne n $, the relation  $a_m{}_i a_n{}^j=0$ holds. It follows from
the grading group $\mathbb Z_3$ that $a_a{}^{i_a}$ (respectively
$a_a{}_{i_a}$) is  of degree $a$ (resp. of degree $-a$), while,
with respect to the group $\Gamma$, $a_a{}^{i_a}$ (resp.
$a_a{}_{i_a}$) is  of degree $\gr(i_a)$ (resp. $-\gr(i_a)$). We
now define \beqa {\cal M}_\alpha{}^{(a)} &=&
\begin{pmatrix} a_0{}^{i_0} & a_2{}^{i_2}& a_1{}^{i_1} \end{pmatrix}
\begin{pmatrix}
\left(M_0{}_\alpha{}^{(a)}\right)_{i_0}{}^{j_0} &0&0 \cr
0&\left(M_2{}_\alpha{}^{(a)}\right)_{i_2}{}^{j_2} &0 \cr
0&0&\left(M_1{}_\alpha{}^{(a)}\right)_{i_1}{}^{j_1}  \cr
                     \end{pmatrix}
\begin{pmatrix} a_0{}_{j_0} \\ a_2{}_{j_2} \\ a_1{}_{j_1} \end{pmatrix}
, \nonumber \\ \\
{\cal N}_m{}^{(a)} &=&
\begin{pmatrix} a_0{}^{i_0} & a_2{}^{i_2}& a_1{}^{i_1} \end{pmatrix}
\begin{pmatrix}
0& \left(M_{0-2}{}_m{}^{(a)}\right)_{i_0}{}^{j_2}&0 \cr 0&
0&\left(M_{2-1}{}_m{}^{(a)}\right)_{i_2}{}^{j_1} \cr
\left(M_{1-0}{}_m{}^{(a)}\right)_{i_1}{}^{j_0}&0&0 \cr
                     \end{pmatrix}
\begin{pmatrix} a_0{}_{j_0} \\ a_2{}_{j_2} \\ a_1{}_{j_1} \end{pmatrix}.
\nonumber
\eeqa By definition, the matrices $M_\alpha{}^{(a)}$ and
$N_m{}^{(a)}$ satisfy the relations \eqref{fcol}. Now, applying
Lemma \ref{poly}, the elements
 ${\cal M}_\alpha{}^{(a)}, {\cal N}_m{}^{(a)} \in
{\cal Q}_{n_0,0}(\mathbb C) \oplus {\cal Q}_{n_1,0}(\mathbb C)
\oplus {\cal Q}_{n_2,0}(\mathbb C)$ satisfy the same relations.
Furthermore, since the quon algebra is an associative algebra, the
Jacobi identities are automatically satisfied.
Therefore the color algebra $\g$ has been realized in the quon
algebra ${\cal Q}_{n_0,0}(\mathbb C)
 \oplus {\cal Q}_{n_1,0}(\mathbb C) \oplus {\cal Q}_{n_2,0}(\mathbb C)$,
 finishing the proof.
\end{proof}
\medskip
It should be observed that certain types of Lie algebras of order
$3$ do not admit finite dimensional matrix representations.
However, these can realized by means of Clifford algebras of
polynomials \cite{flie-phys,cliff}. Moreover, a similar
argumentation allows to realize any given type of algebra
admitting finite dimensional representations by an appropriate set
of quons with $q=0$.

\section*{Acknowledgments}
During the preparation of this work, the author (RCS) was
financially supported by the research projects MTM2006-09152
(M.E.C.) and CCG07-UCM/ESP-2922 (U.C.M.-C.A.M.).


\begin{thebibliography}{99}
\bibitem{ba}
Y. Bahturin, A. A. Mikhalev, V. M. Petrogradsky and M. V. Zaicev,
{\it Infinite-dimensional Lie superalgebras},
 de Gruyter Expositions in Mathematics, 7. Walter de Gruyter \& Co., Berlin,
1992.
%
\bibitem{sylv}
X.-W. Chen, S. D. Silvestrov, F. van Oystaeyen,
{\it Representations and Cocycle Twists of Color Lie Algebras.
Algebras and Representation Theory},
Algebr. Represent. Theory {\bf 9}, 2006, no. 6, 633--650;
%
%
H. Ljungqvist, S. D. Silvestrov, {\it Involutions in three-dimensional
coloured Lie algebras}, Research Reports 6, Dep. Math., Umea University,
(1996), 36 pp;
%
Ostrovskii, V. L.; Silvestrov, S. D. {\it Representations of
real forms of the graded analogue of a Lie algebra}  (Russian) Ukraïn. Mat. Zh. 44 (1992), no. 11, 1518--1524; translation in 
Ukrainian Math. J. 44 (1992), no. 11, 1395--1401 (1993)
%
 G. Sigurdsson, S. D. Silvestrov, {\it Lie color and hom-Lie algebras of
Witt type and their central extensions},
in ``Generalized Lie Theory in Mathematics, Physics and Beyond~, Eds: S. Silvestrov, E. Paal, V. Abramov, A. Stolin, Spinger, 2009, 9pp.
%
G. Sigurdsson, S. D. Silvestrov,
{\it Bosonic Realizations of the Colour Heisenberg Lie Algebra},
J. of Nonlinear Mathematical Physics, {\bf 13} Supl. (2006), 110--128.
%
 G. Sigurdsson, S. D. Silvestrov,
{\it Graded quasi-Lie algebras of Witt type},
 Czechoslovak J. Phys.{\bf  56}, 11-12 (2006), 1287--1291.
%
G. Sigurdsson, S. D. Silvestrov, {\it Canonical involutions in
three-dimensional generalised Lie algebras} {\bf  50} (2000), no. 1, 181--186.
%
 Silvestrov, S. D. {\it On the classification of $3$-dimensional
coloured Lie algebras}, in ''Quantum groups and quantum spaces
(Warsaw, 1995)~,  Banach Center Publ., 40, Polish Acad. Sci., Warsaw, 1997, 159--170.
%
Silvestrov, S. D. Hilbert {\it space representations of the graded
analogue of the Lie algebra of the group of plane motions},
Studia Math. {\bf 117} (1996), no. 2, 195--203.
%
\bibitem{Cole}
S. Coleman and J. Mandula, {\it All Possible Symmetries of the
$S$-Matrix}, Phys. Rev. {\bf 159} (1967) B 1251-1256;\
 O. W. Greenberg, {\it Coupling of Internal and
Space-Time Symmetries}, Phys. Rev. {\bf 135} (1964) 1447-1450;\;
L- O'Raifeartaigh, {\it Lorentz Invariance and Internal Symmetry},
Phys. Rev {\bf 139} (1965), B 1052-1062.
%
  [arXiv:hep-th/0312066].
\bibitem{Nee} L. Corwin, Y. Ne'eman and S. Sternberg, {\it Graded
Lie Algebras in Mathematics and Physics (Bose-Fermi symmetry)},
Rev. Mod. Phys. {\bf 47} (1975), 573-603.

\bibitem{quons}
D.~I. Fivel, {\it Interpolation between Fermi and Bose Statistics
Using Generalized Commutators}, Phys. \ Rev. \ Lett. 65 (1990)
3361--3364;
O.~W.~Greenberg, {\it Particles with small violations
of Fermi or Bose statistics,} Phys.\ Rev. \ {\bf D} 43 (1991)
4111--4120; [arXiv:quant-ph/9903069].
%
%
\bibitem{Haa}
 P. G. O.~Freund, I. ~Kaplansky, {\it Simple
Supersymmetries}, J. Math. Phys. {\bf 17} (1976), 228-231;
 R. Haag, J.
T. Lopuszanski and M. Sohnius, {\it All Possible Generators Of
Supersymmetries Of The S Matrix}, Nucl. Phys. B {\bf 88} (1975)
257-274.
%
\bibitem{quon2}
O.~W.~Greenberg, {\it Example of Infinite Statistics},
  Phys.\ Rev.\ Lett.\  {\bf 64} (1990) 705--708.
\bibitem{gr}
M. Goze and M. Rausch de Traubenberg, {\it Hopf algebra for
Ternary Algebras and Group}  arXiv:0809.4212 [math-ph].
%
\bibitem{flie}
M.~Goze, M.~Rausch de Traubenberg and A.~Tanasa, {\it Poincar\'e
and $\mathfrak{sl}(2)$ algebras of order 3,} J. \ Math. \ Phys,
{\bf 48} (2007) 093507. [arXiv:math-ph/0603008];
M.~Rausch de Traubenberg and M.~J.~Slupinski, {\it Fractional
supersymmetry and F(th) roots of representations,} J.\ Math.\
Phys.\  {\bf 41}, (2000), 4556--4571 [arXiv:hep-th/9904126];
 M.~Rausch de Traubenberg and M.~J.~Slupinski,
{\it Finite-dimensional Lie algebras of order F,} J.\ Math.\
Phys.\  {\bf 43}, (2002), 5145--5160  [arXiv:hep-th/0205113].
%
\bibitem{gj}
H.~S.~Green and P.~D.~Jarvis,
  {\it Casimir Invariants, Characteristic Identities And Young Diagrams For
 Color Algebras And Superalgebras,}
  J.\ Math.\ Phys.\  {\bf 24}, (1983), 1681--1687.
%
\bibitem{Gun} M. G\"unaydin and F. G\"ursey, {\it Quark statistics
and octonions}, Phys. Rev. D {\bf 9} (1974), 3387-3391.
%
\bibitem{k} R. Kleeman, {\it  Aspects of modular quantization}
J. Math. Phys. 24 (1983),  166--172.
%
\bibitem{cla-phys}
J.~Lukierski and V.~Rittenberg,
 {\it Color - De Sitter And Color - Conformal Superalgebras,}
  Phys.\ Rev.\ D {\bf 18}, (1978), 385--389;
L. A. Wills-Toro, L. A.  S\'anchez, D. Bleecker, {\it  Trefoil
symmetry. V. Class representations for the minimal clover
extension},  Internat. J. Theoret. Phys. {\bf 42} (2003), 73--83;
L. A. Wills-Toro, L. A. S\'anchez, X.  Leleu, {\it Trefoil
symmetry. IV. Basic enhanced superspace for the minimal vector
clover } Internat. J. Theoret. Phys. {\bf 42} (2003),  57--72; L.
A. Wills-Toro, {\it  Trefoil symmetry. III. The full clover
extension}, J. Math. Phys. {\bf 42} (2001),  3947--3964;
 L. A. Wills-Toro, L. A.  S\'anchez, J. M.  Osorio, D. E.  Jaramillo,
{\it Trefoil symmetry. II. Another clover extension}, J. Math.
Phys. {\bf 42} (2001), 3935--3946; L. A. Wills-Toro, {\it
Trefoil symmetries. I. Clover extensions beyond Coleman-Mandula
theorem}, J. Math. Phys. {\bf 42} (2001), 3915--3934.
%
\bibitem{flie-phys}
 N.~Mohammedi, G.~Moultaka and M.~Rausch de
Traubenberg, {\it Field theoretic realizations for cubic
supersymmetry,} Int.\ J.\ Mod.\ Phys.\ A {\bf 19},  (2004),
5585--5608 [arXiv:hep-th/0305172];
G.~Moultaka, M.~Rausch de Traubenberg and A.~Tanasa, {\it Cubic
supersymmetry and abelian gauge invariance,} Int.\ J.\ Mod.\
Phys.\ A {\bf 20}, (2005), 5779--5806 [arXiv:hep-th/0411198];
G.~Moultaka, M.~Rausch de Traubenberg and A.~Tanasa, {\it
Non-trivial extension of the Poincare algebra for antisymmetric
gauge fields,} { Proceedings of the XIth International Conference
Symmetry Methods in Physics, Prague 21-24 June 2004},
arXiv:hep-th/0407168.
%
\bibitem{gca}
A.~O.~Morris, {\it On a generalized Clifford algebra}, Quart. J.
Math., Oxford {\bf 18} (1967) 7--12; A.~O.~Morris, {\it On a
generalized Clifford algebra. II}, Quart. J. Math., Oxford {\bf
19} (1968) 289--299. ; M.~Rausch de Traubenberg, {\it Clifford
algebras, supersymmetry and $Z_n$  symmetries: Applications in
field theory,} arXiv:hep-th/9802141; M.~Rausch de Traubenberg and
N.~Fleury, {\it Linearization of polynomials}
  J.\ Math.\ Phys.\  {\bf 33}, (1992) 3356-3366.
%
\bibitem{noether}
 M.~Rausch de Traubenberg,
 {\it Four dimensional cubic supersymmetry},
 Pr. Inst. Mat.
Nats. Akad. Nauk Ukr. Mat. Zastos., 50, Part 1, 2, 3, Natsional.
Akad. Nauk Ukraïni, \~Inst. Mat., Kiev, 2004, pp. 578-585.
%
\bibitem{cla1}
V.~Rittenberg and D.~Wyler, {\it Sequences Of Z(2) Z(2) Graded Lie
Algebras And Superalgebras,} J.\ Math.\ Phys.\  {\bf 19}, (1978),
2193--2200; V.~Rittenberg and D.~Wyler,
 {\it Generalized Superalgebras,}
  Nucl.\ Phys.\ B {\bf 139}, (1978), 189--202.
%
\bibitem{cliff}
N.~Roby, {\it  Alg\`ebres de Clifford des formes polynomes},
 C. R. Acad. Sc. Paris, {\bf 268} (1969) A484--A486;
Ph.~Revoy, {\it Nouvelles alg\`bres de Clifford},
 C. R. Acad. Sc. Paris {\bf 284} (1977) A985--A988;
N.~Fleury, M.~Rausch de Traubenberg, {\it Linearization of
polynomials}, J. Math. Phys. {\bf 33} (1992) 3356--3366; N.~Fleury
and M.~Rausch de Traubenberg, {\it Finite-dimensional
representations of Clifford algebras of polynomials} Adv. Appl.
Cliff. Alg. {\bf 4} (1994) 123--130; M.~Rausch de Traubenberg,
Habilitation Thesis, hep-th/9802141 [in French].
\bibitem{sch}
M.~Scheunert,  {\it Generalized Lie Algebras,}
  J.\ Math.\ Phys.\  {\bf 20},  (1979), 712--720.
%
\end{thebibliography}
\end{document}